\documentclass[letterpaper, 11pt]{article}

\usepackage[margin=1.0in]{geometry}
\usepackage{setspace} 
	\onehalfspace
\usepackage{microtype}

\usepackage{amsmath} \allowdisplaybreaks
\usepackage{amssymb}
\usepackage{amsthm} %
\usepackage{mathtools}

\theoremstyle{definition}
\newtheorem{definition}{Definition}

\theoremstyle{plain} 
\newtheorem{theorem}{Theorem}
\newtheorem{lemma}{Lemma}

\usepackage{multirow}
\usepackage{array}
\usepackage{caption}
\usepackage{lscape}
\usepackage{subcaption}

\usepackage{natbib}
\usepackage{bibentry}

\usepackage{booktabs}
\usepackage{graphicx}
\usepackage{appendix}
\usepackage{enumerate}

\usepackage{url}
\usepackage[bookmarks=false,hidelinks]{hyperref}

\usepackage{authblk}

\setlength{\affilsep}{0.2em}

\newcommand{\email}[1]{{\href{mailto:#1}{\nolinkurl{#1}}}}

\usepackage{bm}
\renewcommand{\vec}[1]{\bm{#1}}

\newcommand{\set}[1]{\bm{#1}}

\usepackage{xcolor}

\usepackage{etoolbox}
\makeatletter
\def\do#1{\@namedef{#1c}{\ensuremath{\mathcal{#1}}}}
\docsvlist{A,B,C,D,E,F,G,H,I,J,K,L,M,N,O,P,Q,R,S,T,U,V,W,X,Y,Z}
\makeatother

\newcommand{\du}{\mathop{}\!\mathrm{d}u}

\renewcommand{\bar}[1]{\mkern 1.5mu\overline{\mkern-1.5mu#1\mkern-1.5mu}\mkern 1.5mu}
\renewcommand{\hat}{\widehat}
\renewcommand{\tilde}{\widetilde}

\renewcommand{\epsilon}{\varepsilon}

\newcommand{\ASatUE}{\mathsf{A}\mathsf{Sat}\mathsf{UE}}
\newcommand{\MSatUE}{\mathsf{M}\mathsf{Sat}\mathsf{UE}}

\newcommand{\UEPEX}{\mathsf{UE}\text{-}\mathsf{PE}\text{-}\set{X}}
\newcommand{\UEPEV}{\mathsf{UE}\text{-}\mathsf{PE}\text{-}\set{V}}
\newcommand{\UEPEF}{\mathsf{UE}\text{-}\mathsf{PE}\text{-}\set{F}}

\newcommand{\PRUE}{\mathsf{PRUE}}

\newcommand{\PoA}{\mathsf{PoA}}
\newcommand{\PoSat}{\mathsf{PoSat}}

\newcommand{\normV}[1]{\left\lVert#1\right\rVert_{\set{V}}}

\newcommand{\anymapsto}{\overset{\mathrm{any}}{\longmapsto}}

\usepackage{cleveref}%
\usepackage{lipsum}%
\crefname{lemma}{Lemma}{Lemmas}
\crefname{theorem}{Theorem}{Theorems}
\crefname{corollary}{Corollary}{Corollaries}

\usepackage{tikz}
\usepackage{pgf}

\usepackage{pgfplots}
\usetikzlibrary{positioning}
\usetikzlibrary{arrows,shapes}
\usetikzlibrary{arrows.meta}

\usetikzlibrary{fit}

\title{On the Price of Satisficing in Network User Equilibria}
\author{Mahdi Takalloo}
\affil{Department of Industrial and Management Systems Engineering, University of South Florida, Tampa, FL 33620}
\author{Changhyun Kwon\thanks{Corresponding author: \texttt{chkwon@usf.edu}}}
\affil{Department of Industrial and Management Systems Engineering, University of South Florida, Tampa, FL 33620}
\date{September 6, 2019}

\begin{document}
\maketitle

\begin{abstract}
When network users are satisficing decision-makers, the resulting traffic pattern attains a satisficing user equilibrium, which may deviate from the (perfectly rational) user equilibrium. In a satisficing user equilibrium traffic pattern, the total system travel time can be worse than in the case of the PRUE. We show how bad the worst-case satisficing user equilibrium traffic pattern can be, compared to the perfectly rational user equilibrium. We call the ratio between the total system travel times of the two traffic patterns the price of satisficing, for which we provide an analytical bound. We compare the analytical bound with numerical bounds for several transportation networks.
\\[1em]
\noindent\textbf{Keywords:} bounded rationality; satisficing; user equilibrium
\end{abstract}

\section{Introduction}

Instead of assuming a perfectly rational person with a clear system of preferences and perfect knowledge of the surrounding decision-making environment, we can consider \emph{boundedly} rational persons with (1) an ambiguous system of preferences and (2) lack of complete information, following \citet{simon1955behavioral}.
When decision makers are indifferent among alternatives within a certain threshold, they are called \emph{satisficing} decision makers, opposed to \emph{optimizing} decision makers. 
The notion of satisficing was first introduced by \citet{simon1955behavioral,simon1956rational}.
Satisficing decision makers choose any alternative whose utility level is above a threshold, called an \emph{aspiration level}, even when the alternative is not optimal.
The satisficing behavior is related to the first source of boundedness---an ambiguous system of preferences.

In transportation research, modeling drivers' route choice is an important task. 
While the travel-time minimization has been traditionally used as a basis for such modeling, sub-optimal route-choice behavior has gained attention.
Since \citet{mahmassani1987boundedly}, bounded rationality has gained attention in the transportation research literature \citep{szeto2006dynamic, wu2013bounded, han2015formulation, szeto2006dynamic, ge2012alternative, di2014braess, guo2013toll, lou2010robust}. 
Empirical evidence supports bounded rationality of drivers \citep{nakayama2001drivers,zhu2010people}. 
The notion of bounded rationality has also been considered in the evaluation of value of times in connection to route-choice modeling \citep{xu2017route}, and in the model of behavior adjustment process \citep{ye2017rational}.
We refer readers to a review of \citet{di2016boundedly}.
In the non-transportation literature, the notion of bounded rationality and satisficing has also received much attention \citep{charnes1963deterministic,lam2013multiple,jaillet2016satisficing,chen1997boundedly,brown2009satisficing}.

While the above-mentioned transportation research literature considers boundedly rational drivers, their discussion is limited to satisficing drivers without considering the second source of boundedness: lack of complete information on the decision environment.
\citet{sungeneralized} connect the first and the second sources of boundedness by considering both satisficing behavior and incomplete information, in the context of shortest-path finding in \emph{congestion-free} networks.
\citet{sungeneralized} study the second source by considering errors in drivers' perception of arc travel time, and conclude that their perception-error model can generally capture both sources of boundedness in rationality in a single unified modeling framework.

In the literature, the traditional network user equilibrium, Wardrop equilibrium in particular, is called the perfectly rational user equilibrium (PRUE), while a traffic pattern equilibrated among satisficing drivers is called a boundedly rational user equilibrium (BRUE).
In this paper, we will use a new term \emph{satisficing user equilibrium} (SatUE) instead of BRUE to emphasize that it only considers the first source of boundedness without considering drivers' incomplete information on the decision environment.
We believe that the term `BRUE' should be used to describe a broader and more general class of models, including SatUE.

Note that SatUE differs from the stochastic user equilibrium (SUE) \citep{sheffi1985urban} in two important aspects.
First, drivers are assumed to be optimizing decision makers in SUE, while they are satisficing in SatUE.
Second, with appropriate probability distributions assumed in the random utility model in SUE, each path possesses a probability of being chosen; hence we can compute the expected traffic flow rate in each path. 
In SatUE, however, each satisficing path is acceptable to drivers, but it may or may not be chosen by drivers and we do not know its probability of being chosen.
See further discussion in \citet{di2016boundedly}. 

The main contribution of this paper is the quantification of how bad the total system travel time in SatUE can be.
In a SatUE traffic pattern, the total system travel time can be either greater than or less than that of PRUE. 
We define the \emph{price of satisficing} (PoSat) as the ratio between the worst-case total system travel time of SatUE and the total system travel time of PRUE.
This paper quantifies PoSat analytically and compares with numerical bounds.

The analytical quantification of PoSat is related to the price of anarchy (PoA) \citep{koutsoupias1999worst,roughgarden2002bad} that compares the performances of the system optimal solutions and the PRUE solutions. 
Using a similar idea, we can also compare the performance of the perfectly rational user equilibrium traffic patterns and satisficing user equilibrium traffic patterns. 
While PoA quantifies how much system-wide performance we can lose by competing, PoSat quantifies how much we can lose by satisficing. 
\citet{roughgarden2002bad} define and study the PoA of \emph{approximate Nash equilibria}, which are essentially SatUE patterns. 
We develop our bounds for PoSat based on the bounds for PoA of approximate Nash equilibria \citep{christodoulou2011performance} and the ideas from the sensitivity analysis of traffic equilibria \citep{dafermos1984sensitivity}.
Note that \citet{perakis2007price} studies the PoA of the \emph{exact} Nash equilibria with general nonlinear, asymmetric cost functions.

The notion of PoSat is also related to the \emph{price of risk aversion} \citep{nikolova2015burden} and the \emph{deviation ratio} \citep{kleer2016impact}.
When network users are risk-averse decision makers, the price of risk aversion compares the performances of the resulting equilibrium among risk-averse users and the (risk-neutral) PRUE.
When network users' cost functions are deviated from the true cost functions for some reasons, the deviation ratio compares the performances of the resulting equilibrium and the PRUE. 
\citet{kleer2016impact} show that the price of risk aversion is a special case of the deviation ratio. 
In both research articles, however, only cases with a common single origin node are considered.
In this paper, we consider general cases with multiple origin nodes and multiple destination nodes, with asymmetric travel time functions.

This paper is organized as follows. 
In Section \ref{sec:notation}, we introduce the notation and define various concepts including user equilibrium, system optimum, satisficing behavior, price of anarchy, and price of satisficing.
In Section \ref{sec:UE-PE}, we define the user equilibrium with perception errors and make connections with satisficing user equilibrium.
Our main result is introduced in Section \ref{sec:theoretical_PoSat}, where we derive the analytical worst-case bound on the price of satisficing. 
In Section \ref{sec:numerical_PoSat}, we compare the analytical bound with numerical bounds.
Section \ref{sec:conclusion} concludes this paper.

\section{Notation and Definitions} \label{sec:notation}

Since we will use path-based and arc-based flow variables and their corresponding functions and sets interchangeably, we need clear definitions of variables, sets, and functions. 
We use boldfaced lower-case letters for vector quantities as in $\vec{v}$ and normal lower-case letters for their components as in $v_a$; similarly, vector-valued functions like $\vec{t}(\cdot)$ and their components like $t_a(\cdot)$.
We use boldfaced upper-case letters for the set that they belong to, as in $\vec{v} \in \set{V}$.
We use calligraphic capital letters for sets of indices as in $\Nc$.
The only exception is that $\vec{Q}$ (a bold-face capital letter instead of lower case) represents a
vector of $Q_w$, the demand for OD pair $w$. The lower-case version $q^w_i$ is instead the net
amount of flow associated with OD pair $w$ that enters or leaves node $i$ in the next
subsection

\subsection{Traffic Flow Variables and Feasible Sets}
We consider a network with a set of origin and destination $\Wc$ that is represented by directed graph $G (\Nc,\Ac)$, where $\Nc$ is the set of nodes, and $\Ac$ is the set of arcs. 
For each OD pair $w\in\Wc$, the travel demand is $Q_w$ and the set of available paths is $\Pc_w$. 
The set of all available paths in the whole network is defined as $\Pc = \cup_{w\in\Wc} \Pc_w$.

We also define the set of path flow variables $\vec{f}$ as 
\begin{align*}
	\set{F} &= \bigg\{ \vec{f} :  \sum_{p\in\Pc_w} f_p = Q_w \quad \forall w\in\Wc, \qquad f_p \geq 0 \quad \forall p\in\Pc \bigg \}
\end{align*}
and the corresponding set of arc flow variables $\vec{v}$ is defined as
\[
	\set{V} = \bigg\{ \vec{v} : 	v_a = \sum_{p\in\Pc} \delta^p_a f_p \quad \forall a\in\Ac, \qquad \vec{f}\in \set{F}  \bigg\}
\]
where $\delta^p_a=1$ if path $p$ contains arc $a$ and $\delta^p_a=0$ otherwise.
Let $\Ac^+_i$ and $\Ac^-_i$ be the set of arcs whose tail node and head node are $i$, respectively.
When we need to preserve OD information in arc flow variables, we use $\vec{x}$ as follows:
\begin{align*}
	\set{X} &= \bigg\{ \vec{x} : x^w_a = \sum_{p\in\Pc_w} \delta^p_a f_p \quad \forall a\in\Ac,w\in\Wc \qquad \vec{f}\in \set{F} \bigg\} \\
					&= \bigg\{ \vec{x} : \sum_{a\in\Ac^+_i} x^w_a - \sum_{a\in\Ac^-_i} x^w_a = q^w_i \quad \forall w\in\Wc, i\in\Nc \bigg\}
\end{align*}
where $q_i^w= -Q_w$ if $i=o(w)$, $q_i^w= Q_w$ if $i=d(w)$, and $q_i^w= 0$ otherwise.

We have $v_a = \sum_{p\in\Pc} \delta^p_a f_p$, $x^w_a = \sum_{p\in\Pc_w} \delta^p_a f_p$, and $v_a = \sum_{w\in\Wc} x^w_a$.
Therefore, the transformations from $\vec{f}$ to $\vec{v}$, from $\vec{f}$ to $\vec{x}$, and from $\vec{x}$ to $\vec{v}$ are unique, which are denoted by $\vec{f}\mapsto\vec{v}$, $\vec{f}\mapsto\vec{x}$, and $\vec{x}\mapsto\vec{v}$, respectively.
The inverse transformations are, however, not unique. 
In the rest of this paper, to emphasize the non-uniqueness of the transformation and refer to \emph{any} result of such transformation, we use $\anymapsto$; for example, with $\vec{v}\anymapsto\vec{f}$, we consider any $\vec{f}$ such that $v_a = \sum_{p\in\Pc} \delta^p_a f_p$.

We will use $\vec{v}$, $\vec{f}$, and $\vec{x}$ interchangeably to describe the same traffic pattern. 
In particular, we define
\begin{itemize}
\item $\vec{f}^*$, $\vec{v}^*$, $\vec{x}^*$ : system optimal flow vectors (Section \ref{sec:travel_time})
\item $\vec{f}^0$, $\vec{v}^0$, $\vec{x}^0$ : perfectly rational user equilibrium flow vectors (Section \ref{sec:PRUE})
\item $\vec{f}^\kappa$, $\vec{v}^\kappa$, $\vec{x}^\kappa$ : (multiplicative) satisficing user equilibrium flow vectors with a multiplicative factor (to be defined
subsequently) $\kappa$ (Section \ref{sec:sat})
\end{itemize}
Note that when $\kappa=0$, we have $\vec{f}^\kappa=\vec{f}^0$.

\subsection{Travel Time Functions and System Optimum} \label{sec:travel_time}

We denote arc travel function with arc traffic volume $\vec{v}$ by $t_a(\vec{v})$ for each arc $a\in\Ac$. 
We consider a performance function for each arc $a$ as
\[
	z_a(\vec{v}) = t_a(\vec{v}) v_a .
\]
We denote the travel time function along path $p$ with flow $\vec{f}$ by $c_{p}(\vec{f})$. 
When written as functions of $\vec{x}$, the arc travel time is denoted as $\tau_a(\vec{x})$ or $\tau_a^w(\vec{x})$, where the latter is used to emphasize the focus on OD pair $w$. Of course, $\tau_a^w(\vec{x})=\tau_a(\vec{x})=t_a(\vec{v})$, where $\vec{v}=\sum_w x_a^w$. The performance function for path $p \in \Pc$ is as follows:
\[
	z_{p}(\vec{f}) = c_{p}(\vec{f}) f_{p} .
\]
The following shows the relationship between path and arc travel times.
\begin{align*}
c_p(\vec{f}) &= \sum_{a\in\Ac} \delta^p_a t_a(\vec{v}) .
\end{align*}

We define the arc-based total system performance function $Z(\vec{v})$ and path-based total system performance function $C(\vec{f})$ interchangeably as follows:
\begin{align*} 
	Z(\vec{v}) &\equiv \sum_{a\in\Ac} z_a(\vec{v}) = \sum_{a\in\Ac} t_a(\vec{v}) v_a \\
	     &= \sum_{p\in\Pc} z_p(\vec{f}) = \sum_{p\in\Pc} c_p(\vec{f})f_p  = \sum_{w\in\Wc} \sum_{p\in\Pc_w}  c_p(\vec{f}) f_p \equiv C(\vec{f}),
\end{align*}
which is also called the total system travel time.
A flow pattern that minimizes $Z(\cdot)$ or $C(\cdot)$ is called a \emph{system optimal} flow pattern.

The vector-valued function $\vec{t}(\cdot)$ is called \emph{monotone} in $\set{V}$ if
\begin{equation} \label{monotone}
	[\vec{t}(\vec{v}^1) - \vec{t}(\vec{v}^2)]^\top (\vec{v}^1 - \vec{v}^2) \geq 0 
\end{equation}
for all $\vec{v}^1, \vec{v}^2 \in \set{V}$. 
If \eqref{monotone} holds as a strict inequality for all $\vec{v}^1 \neq \vec{v}^2$, it is said \emph{strictly monotone}.
The function $\vec{t}(\cdot)$ is called \emph{strongly monotone} in $\set{V}$ with modulus $\alpha>0$ if
\begin{equation} \label{strongly_monotone}
[\vec{t}(\vec{v}^1) - \vec{t}(\vec{v}^2)]^\top (\vec{v}^1 - \vec{v}^2) \geq \alpha \normV{\vec{v}^1 - \vec{v}^2}^2
\end{equation}
for all $\vec{v}^1, \vec{v}^2 \in \set{V}$, where $\normV{\cdot}$ is the $l^2$-norm in $\set{V}$.
The monotonicity of path-based travel time function $c_p(\cdot)$ or its vector form $\vec{c}(\cdot)$ can be similarly defined.
The path-based function $c_p(\cdot)$, however, is not strongly monotone in general \citep[e.g., see Example 3 in][]{de1998optimization}.

\subsection{Perfectly Rational User Equilibrium} \label{sec:PRUE}

When network users are perfectly rational---they seek the shortest path---we attain the perfectly rational user equilibrium (PRUE) defined as follows:

\begin{definition}  [Perfectly Rational User Equilibrium] 
A traffic pattern $\vec{f}^0$ is called a \emph{perfectly rational user equilibrium} (PRUE), if 
\begin{equation} \label{UE_cond}
(\PRUE) \qquad	f_p^0 >0 \implies c_p(\vec{f}^0) = \min_{p'\in\Pc_w} c_{p'}(\vec{f}^0) 
\end{equation}
for all $p\in\Pc_w$ and $w\in\Wc$.
\end{definition}

Using the arc travel function, the above condition can be restated as follows 
\begin{equation} \label{UE_cond_arc}
f_p^0 >0 \implies \sum_{a\in\Ac} \delta^p_a t_a(\vec{v}^0) = \min_{p'\in\Pc_w} \sum_{a\in\Ac} \delta^{p'}_a t_a(\vec{v}^0) 
\end{equation}
for all $p\in\Pc_w$ and $w\in\Wc$.

It is well known that a solution to the following variational inequality problem is a user equilibrium traffic flow \citep{smith1979existence,dafermos1980traffic}:
\begin{equation} \label{ue-vi-f}
\text{to find } \vec{\bar{f}} \in \set{F} : \sum_{p\in\Pc} c_p(\bar{\vec{f}}) (f_p - \bar{f}_p)  \geq 0 \qquad \forall \vec{f}\in \set{F},
\end{equation}
which can be equivalently rewritten as: 
\begin{equation} \label{ue-vi-v}
\text{to find } \vec{\bar{v}} \in \set{V} : \sum_{a\in\Ac} t_a(\bar{\vec{v}}) (v_a - \bar{v}_a) \geq 0 \qquad \forall \vec{v}\in \set{V},
\end{equation}
 or
\begin{equation} \label{ue-vi-x}
\text{to find } \vec{\bar{x}} \in \set{X} : \sum_{a\in\Ac} \sum_{w\in\Wc} \tau_a(\bar{\vec{x}}) (x^w_a - \bar{x}^{w}_a) \geq 0 \qquad \forall \vec{x}\in \set{X}
\end{equation}
where $\tau^w_a(\vec{x}) = \tau_a(\vec{x}) = t_a(\vec{v})$.

With strictly monotone functions $t_a(\cdot)$, the solution $\bar{\vec{v}}$ to \eqref{ue-vi-v} is unique.
While the transformations $\bar{\vec{v}}\anymapsto\bar{\vec{f}}$ and $\bar{\vec{v}}\anymapsto\bar{\vec{x}}$ are not unique, any such $\bar{\vec{f}}$ and $\bar{\vec{x}}$ are solutions to \eqref{ue-vi-f} and \eqref{ue-vi-x}, respectively; therefore, solutions to \eqref{ue-vi-f} and \eqref{ue-vi-x} are not unique in general.

When the travel time on arc $a$ is a function of only $v_a$, i.e. $t_a=t_a(v_a)$, then it is called \emph{separable}.
With separable arc travel time functions, the variational inequality problem \eqref{ue-vi-v} admits an equivalent convex optimization problem as formulated by \citep{beckmann1956studies}.
In general, if the Jacobian matrix of the arc travel time function vector $\vec{t}(\vec{v})$ is symmetric, that is,
\[
	\frac{\partial t_a(\vec{v})}{\partial v_e} = \frac{\partial t_e(\vec{v})}{\partial v_a} \qquad \forall a,e\in\Ac,
\]
for all $\vec{v}\in\set{V}$,
the variational inequality problem \eqref{ue-vi-v} can be reformulated as an equivalent Beckmann-type convex optimization problem \citep{patriksson2015traffic,friesz2016foundations}.
When the Jacobian is asymmetric, no Beckmann-type convex optimization problem
equivalent to \eqref{ue-vi-v} exists in general. In this case, the arc travel time functions is characterized
as asymmetric and obtaining a PRUE flow requires solving a variational inequality problem.

\subsection{Satisficing User Equilibrium} \label{sec:sat}

We introduce definitions of satisficing behavior and corresponding user equilibrium traffic patterns.
In transportation research literature, boundedly rational user equilibrium (BRUE) is
often defined with an additive term \citep[see e.g.,][]{lou2010robust, di2013boundedly, han2015formulation}. Herein, we refer to BRUE in the literature as an `additive satisficing user
equilibrium' to (i) highlight its additive feature and (ii) limit user behavior to just
satisficing. 
Bounded rationality includes behaviors other than satisficing as well.

\begin{definition}  [Additive Satisficing] \label{def:ASat}
A traffic pattern $\vec{f}$ is called an \emph{additive satisficing user equilibrium (ASatUE)} with an additive factor $E$ , if 
\begin{equation} \label{ASat_cond}
(\ASatUE) \qquad	f_p >0 \implies c_p(\vec{f}) \leq \min_{p'\in\Pc_w} c_{p'}(\vec{f}) + E
\end{equation}
for all $p\in\Pc_w$ and $w\in\Wc$, where $E$ is a positive constant.
\end{definition}

We can also derive a similar definition using a multiplicative term.
While the additive form in Definition \ref{def:ASat} is popularly used in the transportation research literature, the multiplicative form in Definition \ref{def:MSat} enables us to consider the satisficing level in disaggregate arc levels as we will observe in this paper.
Multiplicative satisficing user equilibrium is also called approximate Nash equilibrium in the price of anarchy literature \citep{christodoulou2011performance}.

\begin{definition}  [Multiplicative Satisficing] \label{def:MSat}
A traffic pattern $\vec{f}^\kappa$ is called a \emph{multiplicative satisficing user equilibrium} with a multiplicative factor $\kappa$, or $\kappa$-MSatUE, if 
\begin{equation} \label{MSat_cond}
(\MSatUE) \qquad	f_p^\kappa >0 \implies c_p(\vec{f}^\kappa) \leq (1+\kappa) \min_{p'\in\Pc_w} c_{p'}(\vec{f}^\kappa) 
\end{equation}
for all $p\in\Pc_w$ and $w\in\Wc$, where $\kappa \geq 0$ is a constant.
\end{definition}

Note that the additive ($E$) and multiplicative ($\kappa$) factor in \eqref{ASat_cond} and \eqref{MSat_cond}, respectively, may be
defined for each OD pair $w$. For example, $E_w$ and $\kappa_w$ can replace $E$ and $\kappa$ in \eqref{ASat_cond} and \eqref{MSat_cond},
respectively, to allow for non-homogeneous satisficing thresholds.
In such cases, however, we assume that travelers for the same OD pair are homogeneous with the same threshold $E_w$ or $\kappa_w$.
In this paper, to describe the satisficing behavior, we focus only on $\MSatUE$. Moreover, for simplicity, we use a single value of $\kappa$ for all OD pairs.

\subsection{Price of Satisficing}
The price of anarchy (PoA) compares the performances of a satisficing user equilibrium ($C(\vec{f}^\kappa)$) against that of a system optimum ($C(\vec{f}^*)$).
Among possibly multiple satisficing user equilibrium traffic patterns, we are interested in the worst-case. 
Let $\Psi_\kappa(G, \vec{Q}, \vec{t})$ be the set of all satisficing user equilibria with a multiplicative factor $\kappa$
where $G, \set{Q}$, and $\vec{t}$ denote the underling network, demand vector, and travel time function,
respectively. Then, the PoA for the triplet $(G, \set{Q}, \vec{t})$ is defined as follows:
\begin{equation} \label{PoA}
	\PoA_\kappa(G, \vec{Q}, \vec{t}) = \max_{\vec{f}^\kappa \in \Psi_\kappa(G, \vec{Q}, \vec{t})} \frac{C(\vec{f}^\kappa)}{C(\vec{f}^*)},
\end{equation}
where $\vec{f}^*$ is the system optimum flow for $(G, \vec{Q}, \vec{t})$.
We are usually interested in its upper bound over a set of triplets, $\Omega$., i.e. 
\[
	\sup_{(G, \vec{Q}, \vec{t}) \in \Omega} \PoA_\kappa(G, \vec{Q}, \vec{t})
\]

In the context of bounded rationality and satisficing, we are more interested in comparing the performance of approximate Nash equilibrium $C(\vec{f}^\kappa)$ and the performance of the perfectly rational user equilibrium $C(\vec{f}^0)$.
We define the price of satisifcing ($\PoSat$) of instance $(G, \vec{Q}, \vec{t})$ as follows:
\begin{equation} \label{PoSat}
	\PoSat_\kappa(G, \vec{Q}, \vec{t}) = \max_{\vec{f}^\kappa \in \Psi_\kappa(G, \vec{Q}, \vec{t})} \frac{C(\vec{f}^\kappa)}{C(\vec{f}^0)},
\end{equation}
and its upper bound over $\Omega$ is
\[
\sup_{(G, \vec{Q}, \vec{t}) \in \Omega} \PoSat_\kappa(G, \vec{Q}, \vec{t})
\]

In this paper, $\Omega$ is a set of all triplets where $G$ is a directed graph with a finite number
of nodes and arcs, $\set{Q}$ is a vector of finite and positive constants, and $\vec{t}(\cdot)$ is a vector of
polynomial functions with nonnegative coefficients and of order $n \ge 0$. To emphasize
the latter, we also write $\Omega(n)$ instead of $\Omega$ when appropriate.

\section{User Equilibrium with Perception Errors} \label{sec:UE-PE}

Related to $\MSatUE$ is the user equilibrium with perception error (UE-PE) model.
In this model, we assume that network users are optimizing, i.e. seeking the shortest path; however, we assume that users may have their own perception of the travel time function.
 
We let $\epsilon^w_a$ denote the perception error of travel time along arc $a$ of users in OD pair $w$.
A vector $\vec{\bar{x}}\in \set{X}$ is a solution to the UE-PE model, if 
\begin{equation} \label{ue-pe}
\sum_{a\in\Ac} \sum_{w\in\Wc} ( t_a(\bar{\vec{v}}) - \epsilon^w_a ) (x^w_a - \bar{x}^{w}_a) \geq 0 \qquad \forall \vec{x}\in \set{X}
\end{equation}

The above variational inequality assumes that $\vec{\epsilon}$ is sufficiently small, i.e., $0 \le \epsilon_a^w < t_a(\bar{v})$ for all $a \in \Ac, w \in \Wc$.
Under such an assumption, $t_a(\vec{\bar{v}})-\epsilon_a^w$ can be viewed as the \emph{perceived} travel time for arc $a$ for drivers of OD pair $w$.

The term $\epsilon^w_a$ represents the perception error for arc $a$ and OD pair $w$. In this model, we assume all drivers for each OD pair are homogeneous in their perception of arc travel time.

With changes of variables $\lambda^w_a t_a(\vec{v}) = t_a(\vec{v}) - \epsilon^w_a$, the UE-PE model \eqref{ue-pe} can be restated as follows:
\begin{equation} \label{ue-pe-x}
(\UEPEX) \qquad 	\sum_{a\in\Ac} \sum_{w\in\Wc} \lambda^w_a t_a(\bar{\vec{v}}) (x^w_a - \bar{x}^{w}_a) \geq 0 \qquad \forall \vec{x}\in \set{X}
\end{equation}
for some $\vec{\lambda}$ such that $\lambda_a^w \in (0,1]$ for all $w\in\Wc$ and $a\in\Ac$. 
We observe that the UE-PE model generates a subset of MSatUE traffic flow patterns.

\begin{lemma}[$\UEPEX \implies \text{MSatUE}$] \label{lem:ue-pe-x}
Suppose $\bar{\vec{x}}$ is a solution to $\UEPEX$ in \eqref{ue-pe-x} with some $\vec{\bar{\lambda}}$ where $\bar{\lambda}^w_a\in[\frac{1}{1+\kappa},1]$ for all $w\in\Wc$ and $a\in\Ac$. 
Then any $\bar{\vec{f}}$ with $\bar{\vec{x}}\anymapsto\bar{\vec{f}}$ is a $\kappa$-MSatUE flow.
\end{lemma}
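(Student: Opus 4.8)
The plan is to reinterpret the $\UEPEX$ variational inequality as an ordinary user-equilibrium condition for a \emph{perception-weighted} path cost, and then to sandwich that weighted cost between $\tfrac{1}{1+\kappa}c_p(\bar{\vec f})$ and $c_p(\bar{\vec f})$ so as to recover the $\MSatUE$ inequality \eqref{MSat_cond}. For $p\in\Pc_w$ I write the weighted path cost as $\tilde{c}^w_p \equiv \sum_{a\in\Ac}\delta^p_a\,\bar\lambda^w_a\,t_a(\bar{\vec v})$, where $\bar{\vec v}$ is the arc-flow vector determined by $\bar{\vec x}$ and is therefore fixed throughout. Since $x^w_a=\sum_{p\in\Pc_w}\delta^p_a f_p$, the linear functional in \eqref{ue-pe-x} satisfies $\sum_{a\in\Ac}\sum_{w\in\Wc}\bar\lambda^w_a t_a(\bar{\vec v})\,x^w_a=\sum_{w\in\Wc}\sum_{p\in\Pc_w}\tilde{c}^w_p f_p$ for every $\vec f\in\set F$ with $\vec f\mapsto\vec x$.

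First I would show that \emph{any} $\bar{\vec f}$ with $\bar{\vec x}\anymapsto\bar{\vec f}$ solves the path-flow problem of finding $\bar{\vec f}\in\set F$ with $\sum_{w\in\Wc}\sum_{p\in\Pc_w}\tilde{c}^w_p(f_p-\bar f_p)\ge 0$ for all $\vec f\in\set F$. Indeed, for an arbitrary $\vec f\in\set F$ with image $\vec x\in\set X$, the identity above gives $\sum_{w}\sum_{p}\tilde{c}^w_p(f_p-\bar f_p)=\sum_{a}\sum_{w}\bar\lambda^w_a t_a(\bar{\vec v})(x^w_a-\bar x^w_a)\ge 0$ directly from \eqref{ue-pe-x}, so the weighted objective depends on $\vec x$ only and the inequality is inherited by every preimage $\bar{\vec f}$. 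Because the coefficients $\tilde{c}^w_p$ are constants and $\set F$ decomposes across OD pairs into scaled simplices $\{\,\vec f\ge 0:\sum_{p\in\Pc_w}f_p=Q_w\,\}$, the standard complementarity argument—testing the feasible direction that shifts a small amount of flow from an active path $p$ to any competitor $p'$—yields, for every $p\in\Pc_w$, the equilibrium implication $\bar f_p>0\implies \tilde c^w_p=\min_{p'\in\Pc_w}\tilde c^w_{p'}$.

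Finally I would translate this weighted-cost equilibrium into the true-cost $\MSatUE$ bound. The hypothesis $\bar\lambda^w_a\in[\tfrac{1}{1+\kappa},1]$ together with $t_a(\bar{\vec v})\ge 0$ gives the pointwise sandwich $\tfrac{1}{1+\kappa}\,c_p(\bar{\vec f})\le \tilde c^w_p\le c_p(\bar{\vec f})$ for every $p\in\Pc_w$, where $c_p(\bar{\vec f})=\sum_{a\in\Ac}\delta^p_a t_a(\bar{\vec v})$. Fixing $p\in\Pc_w$ with $\bar f_p>0$ and taking $p^\star\in\arg\min_{p'\in\Pc_w}c_{p'}(\bar{\vec f})$, I chain the equilibrium identity with the two sides of the sandwich:
\[
\tfrac{1}{1+\kappa}\,c_p(\bar{\vec f})\le \tilde c^w_p=\min_{p'\in\Pc_w}\tilde c^w_{p'}\le \tilde c^w_{p^\star}\le c_{p^\star}(\bar{\vec f})=\min_{p'\in\Pc_w}c_{p'}(\bar{\vec f}).
\]
Rearranging gives $c_p(\bar{\vec f})\le(1+\kappa)\min_{p'\in\Pc_w}c_{p'}(\bar{\vec f})$, which is exactly \eqref{MSat_cond}.

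The only real subtlety—and the step I would treat most carefully—is the passage from the $\vec x$-form to the path form under the non-unique transformation $\bar{\vec x}\anymapsto\bar{\vec f}$: I must verify that the weighted objective is a function of $\vec x$ alone, independent of the chosen path decomposition, so that every preimage $\bar{\vec f}$ inherits the variational inequality and hence the complementarity condition. Everything after that is the routine linear-VI-to-equilibrium reduction combined with the elementary coefficient bounds coming from $\bar\lambda^w_a\in[\tfrac{1}{1+\kappa},1]$, so I do not anticipate further difficulty.
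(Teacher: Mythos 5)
Your proof is correct and takes essentially the same route as the paper's: both reinterpret the $\UEPEX$ solution as a Wardrop-type equilibrium with respect to the perceived path costs $\tilde{c}^w_p=\sum_{a\in\Ac}\delta^p_a\bar{\lambda}^w_a t_a(\bar{\vec{v}})$, and then apply the identical sandwich $\tfrac{1}{1+\kappa}c_p(\bar{\vec{f}})\le \tilde{c}^w_p\le c_p(\bar{\vec{f}})$ to the active paths to obtain \eqref{MSat_cond}. The only difference is one of detail: you explicitly prove the passage from the $\vec{x}$-based variational inequality to the path-level complementarity condition (which is essentially the content of Lemma~\ref{lem:ue-pe-f} plus the standard linear-VI argument over the simplices), whereas the paper invokes this equivalence as known, citing \eqref{UE_cond_arc}.
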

\begin{proof}[Proof of Lemma \ref{lem:ue-pe-x}]
Given $\bar{\vec{f}}$, we let $\bar{\vec{v}}$ be the arc flow vector from $\bar{\vec{f}}\mapsto\bar{\vec{v}}$.
Let $\bar{\epsilon}$ be the perception error that makes $\bar{\vec{x}}$ a solution to \eqref{ue-pe}. Then, $\vec{\bar{x}}$ is a user equilibrium flow with respect to arc travel time $\lambda_a^w t_a(\cdot)$ and the following
follows from \eqref{UE_cond_arc}:

\begin{equation}\label{ue_epsilon}
	\bar{f}_p>0 \implies \sum_{a\in\Ac} \delta^p_a \bar{\lambda}^w_a t_a(\bar{\vec{v}}) = \min_{p'\in\Pc_w} \sum_{a\in\Ac} \delta^{p'}_a \bar{\lambda}^w_a t_a(\bar{\vec{v}}) 
\end{equation}
for all $p\in\Pc_w$ and $w\in\Wc$.
Since $\bar{\lambda}^w_a\in[\frac{1}{1+\kappa},1]$, the right-hand-side of \eqref{ue_epsilon} implies
\[
	\frac{1}{1+\kappa} \sum_{a\in\Ac} \delta^{p}_a  t_a(\bar{\vec{v}}) 
	\leq \min_{p'\in\Pc_w} \sum_{a\in\Ac} \delta^{p'}_a \bar{\lambda}^w_a t_a(\bar{\vec{v}})  
	\leq  \min_{p'\in\Pc_w} \sum_{a\in\Ac} \delta^{p'}_a t_a(\bar{\vec{v}})  ,
\]
which is equivalent to the following path flow form:
\[
	c_{p}(\vec{\bar{f}}) 
	\leq  (1+\kappa) \min_{p'\in\Pc_w} c_{p'}(\vec{\bar{f}}) .
\]
Therefore, we conclude that $\bar{\vec{f}}$ is a $\kappa$-MSatUE traffic flow.
\end{proof}

We can also provide a path-based formulation of UE-PE:
\begin{equation}\label{ue-pe-f}
(\UEPEF)\quad	\sum_{w\in\Wc} \sum_{p\in\Pc_w} \tilde{c}_p^w(\bar{\vec{f}}) ( f_p - \bar{f}_p ) \geq 0 \qquad \forall \vec{f}\in\set{F}
\end{equation}
for the perceived path travel time functions $\tilde{c}_p^w(\vec{f})=\sum_{a\in\Ac} \delta^p_a \lambda^w_a t_a(\vec{v})$ with \emph{some} $\vec{\lambda}$ such that $\lambda_a^w \in (0,1]$ for all $a\in\Ac, w\in\Wc$.

\begin{lemma}[$\UEPEF\iff\UEPEX$] \label{lem:ue-pe-f}
If $\bar{\vec{f}}\in\set{F}$ is a solution to $\UEPEF$ in \eqref{ue-pe-f} for some $\vec{\lambda}$ such that $\lambda^w_a \in [\frac{1}{1+\kappa},1]$,
then $\bar{\vec{x}}$ with $\bar{\vec{f}}\mapsto\bar{\vec{x}}$ is a solution to $\UEPEX$ in \eqref{ue-pe-x}.
Conversely, if $\bar{\vec{x}}\in\set{X}$ is a solution to $\UEPEX$ in \eqref{ue-pe-x},
then any $\bar{\vec{f}}$ with $\bar{\vec{x}}\anymapsto\bar{\vec{f}}$ is a solution to $\UEPEF$ in \eqref{ue-pe-f}.
\end{lemma}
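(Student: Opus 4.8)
The plan is to reduce the whole equivalence to a single algebraic identity that converts the path-based bilinear form of $\UEPEF$ into the arc-based bilinear form of $\UEPEX$, and then to dispatch the two implications by careful bookkeeping of the flow transformations. The engine of the proof is the substitution $\tilde{c}_p^w(\bar{\vec{f}})=\sum_{a\in\Ac}\delta^p_a\bar{\lambda}^w_a t_a(\bar{\vec{v}})$ followed by an interchange of the order of summation. Concretely, for any $\vec{f}\in\set{F}$ with $\vec{f}\mapsto\vec{x}$ and any fixed reference $\bar{\vec{f}}\mapsto\bar{\vec{v}}$ I would establish
\begin{equation*}
\sum_{w\in\Wc}\sum_{p\in\Pc_w}\tilde{c}_p^w(\bar{\vec{f}})(f_p-\bar{f}_p)
=\sum_{a\in\Ac}\sum_{w\in\Wc}\lambda^w_a t_a(\bar{\vec{v}})\sum_{p\in\Pc_w}\delta^p_a(f_p-\bar{f}_p)
=\sum_{a\in\Ac}\sum_{w\in\Wc}\lambda^w_a t_a(\bar{\vec{v}})(x^w_a-\bar{x}^w_a),
\end{equation*}
where the last equality uses the defining relation $x^w_a=\sum_{p\in\Pc_w}\delta^p_a f_p$ of $\set{X}$ (and likewise for $\bar{\vec{x}}$). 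Note that the factor $\bar{\lambda}^w_a$ is simply carried along; the range $[\frac{1}{1+\kappa},1]$ plays no role in this equivalence, only in the downstream link to $\MSatUE$ via Lemma~\ref{lem:ue-pe-x}, so I would keep $\vec{\lambda}$ fixed and identical in both formulations throughout.

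With the identity in hand, each implication is a one-line quantifier argument. For $\UEPEF\implies\UEPEX$: given $\bar{\vec{f}}$ solving \eqref{ue-pe-f} and $\bar{\vec{x}}$ with $\bar{\vec{f}}\mapsto\bar{\vec{x}}$, I would take an arbitrary $\vec{x}\in\set{X}$, invoke the definition of $\set{X}$ to obtain some $\vec{f}\in\set{F}$ with $\vec{f}\mapsto\vec{x}$, and apply the identity to conclude that the arc-based form equals the path-based form, which is nonnegative because $\bar{\vec{f}}$ solves $\UEPEF$ for \emph{all} $\vec{f}\in\set{F}$. For the converse, given $\bar{\vec{x}}$ solving \eqref{ue-pe-x} and any $\bar{\vec{f}}$ with $\bar{\vec{x}}\anymapsto\bar{\vec{f}}$, I would take an arbitrary $\vec{f}\in\set{F}$, let $\vec{x}$ be its (unique) image under $\vec{f}\mapsto\vec{x}$ so that automatically $\vec{x}\in\set{X}$, and again read the identity in the opposite direction to inherit nonnegativity from $\UEPEX$.

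The one genuinely delicate point—and the step I expect to require the most care—is the consistency of the reference arc volume $\bar{\vec{v}}$ across the two representations in the converse, where only the non-unique map $\bar{\vec{x}}\anymapsto\bar{\vec{f}}$ is available. The perceived path cost $\tilde{c}_p^w(\bar{\vec{f}})$ evaluates $\vec{t}$ at the $\bar{\vec{v}}$ coming from $\bar{\vec{f}}\mapsto\bar{\vec{v}}$, whereas the arc form evaluates $\vec{t}$ at the $\bar{\vec{v}}$ determined by $\bar{\vec{x}}$ through $\bar{v}_a=\sum_{w\in\Wc}\bar{x}^w_a$. I would verify these coincide by noting that $\bar{\vec{x}}\anymapsto\bar{\vec{f}}$ forces $\bar{x}^w_a=\sum_{p\in\Pc_w}\delta^p_a\bar{f}_p$, whence $\sum_{w\in\Wc}\bar{x}^w_a=\sum_{p\in\Pc}\delta^p_a\bar{f}_p=\bar{v}_a$ using the disjoint decomposition $\Pc=\cup_{w\in\Wc}\Pc_w$; thus every $\bar{\vec{f}}$ produced by $\anymapsto$ yields the same $\bar{\vec{v}}$, and the identity applies unchanged regardless of which preimage is chosen. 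Once this consistency is noted, the non-uniqueness of $\anymapsto$ becomes harmless and both directions close.
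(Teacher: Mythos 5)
Your proof is correct and takes essentially the same route as the paper: both hinge on the single summation-interchange identity converting the path-based bilinear form into the arc-based one, from which the two implications follow immediately. The extra care you take with the consistency of $\bar{\vec{v}}$ under the non-unique map $\bar{\vec{x}}\anymapsto\bar{\vec{f}}$ is a valid detail the paper leaves implicit, but it does not change the argument.
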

\begin{proof}[Proof of Lemma \ref{lem:ue-pe-f}]
We can prove both directions by observing that
\begin{align*}
\sum_{w\in\Wc} \sum_{p \in \Pc_w} \tilde{c}_p^w(\bar{\vec{f}}) ( f_p - \bar{f}_p ) 
&= \sum_{w\in\Wc} \sum_{p\in\Pc_w} \sum_{a\in\Ac} \delta^p_a \lambda^w_a t_a(\bar{\vec{v}}) ( f_p - \bar{f}_p ) \\
&= \sum_{w\in\Wc} \sum_{a\in\Ac} \lambda^w_a t_a(\bar{\vec{v}}) \bigg( \sum_{p\in\Pc_w} \delta^p_a f_p - \sum_{p\in\Pc_w} \delta^p_a \bar{f}_p \bigg) \\
&= \sum_{w\in\Wc} \sum_{a\in\Ac} \lambda^w_a t_a(\bar{\vec{v}}) ( x^w_a - \bar{x}^w_a).
\end{align*}
\end{proof}

When the values of $\lambda^w_a$ are the same across all $w\in\Wc$, i.e. $\lambda_a=\lambda^w_a$ for all $w\in\Wc$, we can simplify \eqref{ue-pe-x} as follows:
\begin{equation} \label{ue-pe-v}
(\UEPEV) \quad \sum_{a\in\Ac} \lambda_a t_a(\bar{\vec{v}}) (v_a - \bar{v}_a) \geq 0 \qquad \forall \vec{v}\in \set{V}
\end{equation}
for \emph{some} $\vec{\lambda}$ such that $\lambda_a \in (0,1]$ for each $a\in\Ac$. 
The simplified model \eqref{ue-pe-v} has been considered in the literature for approximate Nash equilibrium \citep{christodoulou2011performance} and Nash equilibrium with deviated travel time functions \citep{kleer2016impact}.
For the simplified model, we can state:
\begin{lemma}[$\UEPEV \implies \UEPEX$] \label{lem:ue-pe-v}
Suppose that $\bar{v}\in \set{V}$ is a solution to $\UEPEV$ in \eqref{ue-pe-v} for some $\vec{\lambda}$ such that $\lambda_a \in [\frac{1}{1+\kappa},1]$ for all $a\in\Ac$.
Let $\bar{\vec{x}}$ be any vector with $\bar{\vec{v}}\anymapsto\bar{\vec{x}}$.
Then $\bar{\vec{x}}$ is a solution to $\UEPEX$ in \eqref{ue-pe-x}.
\end{lemma}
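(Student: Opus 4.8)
The plan is to exploit the fact that in $\UEPEV$ the perception multipliers are OD-independent, so that the OD-disaggregated inner product in $\UEPEX$ collapses to the aggregate one appearing in $\UEPEV$. First I would fix the multipliers in the $\UEPEX$ formulation \eqref{ue-pe-x} by setting $\lambda_a^w = \lambda_a$ for every $w \in \Wc$. This is a legitimate instantiation because \eqref{ue-pe-x} permits any $\lambda_a^w \in (0,1]$, and the hypothesis guarantees $\lambda_a \in [\frac{1}{1+\kappa},1]$, which is contained in $(0,1]$.

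With this choice, for an arbitrary test vector $\vec{x} \in \set{X}$ I would rewrite the left-hand side of \eqref{ue-pe-x} by pulling the now OD-independent factor $\lambda_a t_a(\bar{\vec{v}})$ outside the sum over $w$ and invoking the aggregation identities $\sum_{w\in\Wc} x_a^w = v_a$ and $\sum_{w\in\Wc} \bar{x}_a^w = \bar{v}_a$; the second of these holds precisely because $\bar{\vec{v}}\anymapsto\bar{\vec{x}}$. This turns $\sum_{a\in\Ac}\sum_{w\in\Wc} \lambda_a t_a(\bar{\vec{v}})(x_a^w - \bar{x}_a^w)$ into $\sum_{a\in\Ac} \lambda_a t_a(\bar{\vec{v}})(v_a - \bar{v}_a)$, which is exactly the left-hand side of the $\UEPEV$ inequality \eqref{ue-pe-v} evaluated at the aggregate flow $\vec{v}$ obtained from $\vec{x}\mapsto\vec{v}$.

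The only remaining step, and the sole place asking for a moment of care, is to confirm that the scalar inequality supplied by \eqref{ue-pe-v} is actually applicable to this $\vec{v}$. Since $\vec{x}\in\set{X}$ and $v_a = \sum_{w\in\Wc} x_a^w$, the aggregate vector $\vec{v}$ lies in $\set{V}$, so the hypothesis that $\bar{\vec{v}}$ solves \eqref{ue-pe-v} directly yields $\sum_{a\in\Ac} \lambda_a t_a(\bar{\vec{v}})(v_a - \bar{v}_a) \ge 0$. Chaining the identity from the previous paragraph with this inequality gives $\sum_{a\in\Ac}\sum_{w\in\Wc} \lambda_a t_a(\bar{\vec{v}})(x_a^w - \bar{x}_a^w) \ge 0$ for every $\vec{x}\in\set{X}$, which is precisely \eqref{ue-pe-x}, establishing that $\bar{\vec{x}}$ solves $\UEPEX$. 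I do not anticipate a genuine obstacle: the substance of the lemma is just that disaggregating a solution of the aggregate variational inequality preserves it whenever the multipliers are common across OD pairs, and the proof amounts to a one-line collapse of the double sum followed by applying the hypothesis to the induced $\vec{v}\in\set{V}$.
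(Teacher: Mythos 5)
Your proof is correct: the paper states this lemma without an explicit proof, treating it as immediate. Your argument---instantiating $\lambda_a^w=\lambda_a$ for all $w\in\Wc$, collapsing the double sum via $\sum_{w\in\Wc}x_a^w=v_a$ and $\sum_{w\in\Wc}\bar{x}_a^w=\bar{v}_a$, and then applying the $\UEPEV$ inequality to the induced $\vec{v}\in\set{V}$---is precisely the aggregation identity the paper itself uses in the proof of Lemma~\ref{lem:ue-pe-f}, so it matches the intended reasoning exactly.
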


While Lemmas \ref{lem:ue-pe-x}, \ref{lem:ue-pe-f}, and \ref{lem:ue-pe-v} provide sufficient conditions for a traffic flow pattern to be a $\kappa$-MSatUE, Theorem 1 of \citet{christodoulou2011performance} provides a necessary condition.
Although \citet{christodoulou2011performance} assumed separable arc travel time functions, their proof is still valid for nonseparable travel time functions.

\begin{lemma}[A necessary condition of MSatUE] \label{lem:necessary}
Let $\vec{f}^\kappa\in \set{F}$ be a $\kappa$-MSatUE and $\vec{v}^\kappa\in\set{V}$ be the corresponding arc flow vector with $\vec{f}^\kappa \mapsto \vec{v}^\kappa$.
Then we have
\begin{equation} \label{eq:necessary}
	\sum_{a\in\Ac} t_a(\vec{v}^\kappa) ( (1+\kappa)v_a - v^\kappa_a ) \geq 0 \qquad \forall \vec{v}\in\vec{V}.
\end{equation}
\end{lemma}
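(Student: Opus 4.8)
The plan is to turn the arc-based inequality \eqref{eq:necessary} into a path-based statement, decompose it by OD pair, and then apply the defining MSatUE inequality \eqref{MSat_cond} termwise. Throughout, write $L_w = \min_{p'\in\Pc_w} c_{p'}(\vec{f}^\kappa)$ for the shortest path cost in OD pair $w$ under the flow $\vec{f}^\kappa$.

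First I would fix an arbitrary $\vec{v}\in\set{V}$ and pick any $\vec{f}\in\set{F}$ with $\vec{f}\mapsto\vec{v}$. Interchanging the order of summation and using $c_p(\vec{f}^\kappa)=\sum_{a\in\Ac}\delta^p_a t_a(\vec{v}^\kappa)$ together with $v_a=\sum_{p\in\Pc}\delta^p_a f_p$ gives the two path-level identities
\[
\sum_{a\in\Ac} t_a(\vec{v}^\kappa)\, v_a = \sum_{w\in\Wc}\sum_{p\in\Pc_w} c_p(\vec{f}^\kappa)\, f_p, \qquad \sum_{a\in\Ac} t_a(\vec{v}^\kappa)\, v^\kappa_a = \sum_{w\in\Wc}\sum_{p\in\Pc_w} c_p(\vec{f}^\kappa)\, f^\kappa_p,
\]
so that the left-hand side of \eqref{eq:necessary} equals $\sum_{w\in\Wc}\big[(1+\kappa)\sum_{p\in\Pc_w} c_p(\vec{f}^\kappa) f_p - \sum_{p\in\Pc_w} c_p(\vec{f}^\kappa) f^\kappa_p\big]$. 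It therefore suffices to show that each OD-pair summand is nonnegative.

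The key step bounds the two inner sums against the common quantity $L_w Q_w$. Since $c_p(\vec{f}^\kappa)\ge L_w$ for every $p\in\Pc_w$ and $\sum_{p\in\Pc_w} f_p = Q_w$ (because $\vec{f}\in\set{F}$), the first inner sum satisfies $\sum_{p\in\Pc_w} c_p(\vec{f}^\kappa) f_p \ge L_w Q_w$. For the second inner sum, only paths with $f^\kappa_p>0$ contribute, and for exactly those paths the MSatUE condition \eqref{MSat_cond} gives $c_p(\vec{f}^\kappa)\le (1+\kappa)L_w$; combined with $\sum_{p\in\Pc_w} f^\kappa_p = Q_w$ this yields $\sum_{p\in\Pc_w} c_p(\vec{f}^\kappa) f^\kappa_p \le (1+\kappa) L_w Q_w$. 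Substituting both bounds, each summand is at least $(1+\kappa)L_w Q_w - (1+\kappa)L_w Q_w = 0$, and summing over $w\in\Wc$ finishes the proof.

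I expect no serious obstacle; this is a variational-inequality-style comparison rather than a delicate estimate. The one point requiring care is the asymmetric treatment of the two inner sums: the lower bound on the alternative flow uses $L_w$ as a floor on \emph{every} path cost, while the upper bound on the equilibrium flow uses $(1+\kappa)L_w$ as a ceiling and crucially restricts attention to \emph{used} paths, where \eqref{MSat_cond} applies (the unused paths drop out because they carry zero flow). The multiplicative factor $1+\kappa$ attached to the alternative-flow term in \eqref{eq:necessary} is precisely what makes the two bounds cancel, so tracking that factor through the decomposition is the main thing to get right.
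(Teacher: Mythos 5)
Your proof is correct. One point worth flagging about the comparison: the paper does not actually prove Lemma \ref{lem:necessary} in text --- it invokes Theorem 1 of \citet{christodoulou2011performance} and simply remarks that the argument there, written for separable arc travel time functions, remains valid for nonseparable ones. Your argument is exactly the standard proof underlying that citation: rewrite both terms of \eqref{eq:necessary} in path form with all costs frozen at $\vec{f}^\kappa$ (using $c_p(\vec{f}^\kappa)=\sum_{a\in\Ac}\delta^p_a t_a(\vec{v}^\kappa)$ and an $\vec{f}$ obtained via $\vec{v}\anymapsto\vec{f}$, which exists by the definition of $\set{V}$), decompose by OD pair, bound every path cost below by the minimum $L_w$, bound every flow-bearing path cost of $\vec{f}^\kappa$ above by $(1+\kappa)L_w$ via \eqref{MSat_cond}, and close with the demand identities $\sum_{p\in\Pc_w} f_p=\sum_{p\in\Pc_w} f^\kappa_p=Q_w$. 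What your self-contained write-up buys over the paper's deferral is that it makes the nonseparability remark transparent: since $\vec{t}(\cdot)$ is only ever evaluated at the single fixed flow $\vec{v}^\kappa$, no separability, symmetry, or monotonicity assumption enters anywhere, so the lemma visibly holds for the general asymmetric travel time functions \eqref{asymmetric_time} used later in Theorem \ref{thm:compare_asymmetric}.
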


\citet{christodoulou2011performance} derive a tight bound on the price of anarchy on approximate Nash equilibria based on Lemma \ref{lem:necessary}. 

\begin{figure}
\begin{center}
\begin{tikzpicture}
\node at (0,0) (X) {$\UEPEX$}; 
\node[below=0.7cm of X] (F)  {$\UEPEF$};
\node[left=0.7cm of X] (V)  {$\UEPEV$};
\node[right=0.7cm of X] (M) {$\mathrm{MSatUE}$}; 
\node[right=0.7cm of M] (N) {\eqref{eq:necessary}}; 
\draw[implies-implies, double equal sign distance] (F)--(X);
\draw[-implies, double equal sign distance] (V)--(X);
\draw[-implies, double equal sign distance] (X)--(M);
\draw[-implies, double equal sign distance] (M)--(N);
\end{tikzpicture}
\caption{Summary of Lemmas \ref{lem:ue-pe-x}--\ref{lem:necessary}. The relation $\mathsf{X} \implies \mathsf{Y}$ means that any solution to $\mathsf{X}$ yields a solution to $\mathsf{Y}$.}
\label{fig:summary}
\end{center}
\end{figure}

Figure \ref{fig:summary} summarizes the relationships between problems addressed in Lemma \ref{lem:ue-pe-x} to \ref{lem:necessary}.
Note that this result does not mean that the notions of UE-PE and SatUE are equivalent in general.
Rather, the specific modeling of $\UEPEX$, $\UEPEF$, and $\UEPEV$ with the perception error interval $[\frac{1}{1+\kappa},1]$, as defined in \eqref{ue-pe-x}, \eqref{ue-pe-f}, and \eqref{ue-pe-v}, respectively, leads to the result in Figure \ref{fig:summary}.
If we use a different definition of perception error sets, such a result may not hold.

\section{Bounding the Price of Satisficing} \label{sec:theoretical_PoSat}

We first provide analytical bounds of $C(\vec{f}^\kappa)$ compared to $C(\vec{f}^0)$.

\subsection{Lessons from the Price of Anarchy}

We first observe that $\PoSat_\kappa(G, \vec{Q}, \vec{t}) \leq	\PoA_\kappa(G, \vec{Q}, \vec{t})$ for any network instance $(G, \vec{Q}, \vec{t})$, since $C(\vec{f}^0) \geq C(\vec{f}^*)$. 
This enables us to use the results from the price of anarchy literature for bounding PoSat. 
Theorem 2 of \citet{christodoulou2011performance} bound the price of anarchy when arc travel-time functions are separable and
polynomial with nonnegative coefficients and of degree $n$ that leads immediately to the
following result:

\begin{lemma} \label{lem:bound_from_poa}
Suppose $\vec{f}^\kappa$ is a $\kappa$-MSatUE flow, and $t_a(\cdot)$ is polynomial with nonnegative
coefficients and of degree $n$.
Define
\begin{equation}
\zeta(\kappa,n) =
\begin{cases}
(1+\kappa)^{n+1} & \text{ if } \kappa \geq (n+1)^{1/n} - 1 ,\\
\bigg(\frac{1}{1+\kappa} - \frac{n}{(n+1)^{(n+1)/n}} \bigg)^{-1} & \text{ if } 0 \leq \kappa \leq (n+1)^{1/n} - 1.
\end{cases}		
\end{equation}
Then we have
\begin{equation}\label{zetabound}
	C(\vec{f}^*)
		\leq 
	C(\vec{f}^\kappa)
		\leq
	\zeta(\kappa,n) C(\vec{f}^*)
		\leq 
	\zeta(\kappa,n) C(\vec{f}^0).
\end{equation}
That is, the PoSat is bounded above by $\zeta(\kappa,n)$.
\end{lemma}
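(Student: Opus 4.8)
The plan is to assemble the asserted chain of four inequalities, exploiting that three of the four links are elementary and only the middle one carries mathematical content. The guiding observation, recorded just before the statement, is that $C(\vec{f}^0) \ge C(\vec{f}^*)$; consequently any multiplicative upper bound on $C(\vec{f}^\kappa)/C(\vec{f}^*)$ (a price-of-anarchy bound) transfers verbatim to $C(\vec{f}^\kappa)/C(\vec{f}^0)$ (the price of satisficing), provided the multiplier is nonnegative.

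First I would settle the two outer links, which need no polynomial structure. The leftmost inequality $C(\vec{f}^*) \le C(\vec{f}^\kappa)$ is immediate from the definition of $\vec{f}^*$ as a minimizer of $C(\cdot)$ over $\set{F}$. The rightmost inequality $\zeta(\kappa,n) C(\vec{f}^*) \le \zeta(\kappa,n) C(\vec{f}^0)$ is obtained by multiplying $C(\vec{f}^*) \le C(\vec{f}^0)$ through by the scalar $\zeta(\kappa,n)$; here one only needs $\zeta(\kappa,n) > 0$, which is clear on the branch $(1+\kappa)^{n+1}$ and, on the branch $0 \le \kappa \le (n+1)^{1/n}-1$, reduces to checking that $\frac{1}{1+\kappa} - \frac{n}{(n+1)^{(n+1)/n}} > 0$ over that range. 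A quick evaluation at the endpoint $\kappa = (n+1)^{1/n}-1$ gives the bracket the value $(n+1)^{-(n+1)/n}$, and the bracket only increases as $\kappa$ decreases, so positivity holds; the same computation confirms that the two branches agree at the crossover, so $\zeta$ is well defined.

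The substance lies in the middle link $C(\vec{f}^\kappa) \le \zeta(\kappa,n) C(\vec{f}^*)$, which is precisely the price-of-anarchy bound of Theorem 2 of \citet{christodoulou2011performance} for $\kappa$-approximate Nash equilibria under separable polynomial costs of degree $n$, with $\zeta(\kappa,n)$ their bounding expression. The direct route is therefore to identify $\kappa$-MSatUE with their $\kappa$-approximate equilibrium and quote the bound. For a self-contained derivation, the mechanism runs through the necessary condition of Lemma~\ref{lem:necessary}: setting $\vec{v} = \vec{v}^*$ in \eqref{eq:necessary} yields $C(\vec{f}^\kappa) = \sum_{a\in\Ac} t_a(\vec{v}^\kappa) v_a^\kappa \le (1+\kappa)\sum_{a\in\Ac} t_a(\vec{v}^\kappa) v_a^*$, after which each summand is controlled by an arcwise smoothness inequality $(1+\kappa)\, t_a(\vec{v}^\kappa)\, v_a^* \le \lambda\, t_a(\vec{v}^*)\, v_a^* + \mu\, t_a(\vec{v}^\kappa)\, v_a^\kappa$; summing over $a$ gives $C(\vec{f}^\kappa) \le \lambda C(\vec{f}^*) + \mu C(\vec{f}^\kappa)$, and rearranging (for $\mu < 1$) gives $C(\vec{f}^\kappa) \le \frac{\lambda}{1-\mu} C(\vec{f}^*)$.

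The main obstacle---indeed the only place genuine work occurs---is establishing the scalar inequality behind the arcwise bound, namely $(1+\kappa)\, x^n y \le \lambda\, y^{n+1} + \mu\, x^{n+1}$ for all $x,y \ge 0$ (applied monomial-by-monomial, which is where separability of the degree-$n$ costs is used), and then choosing the pair $(\lambda,\mu)$ that minimizes $\frac{\lambda}{1-\mu}$. I expect this scalar optimization to be the delicate step: the optimizer of the associated single-variable problem is what produces the two-branch shape of $\zeta(\kappa,n)$, with the crossover at $\kappa = (n+1)^{1/n}-1$ marking where that optimizer meets the boundary of the feasible region, while everything else is bookkeeping. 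Combining the three links delivers $C(\vec{f}^\kappa) \le \zeta(\kappa,n) C(\vec{f}^0)$, and taking the maximum over $\vec{f}^\kappa \in \Psi_\kappa(G,\vec{Q},\vec{t})$ converts this into $\PoSat_\kappa(G,\vec{Q},\vec{t}) \le \zeta(\kappa,n)$ via the definition \eqref{PoSat}.
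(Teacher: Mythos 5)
Your proposal is correct and follows essentially the same route as the paper: the substantive middle inequality is exactly Theorem 2 of \citet{christodoulou2011performance}, and the outer links are the elementary orderings $C(\vec{f}^*)\le C(\vec{f}^\kappa)$ and $C(\vec{f}^*)\le C(\vec{f}^0)$. The only cosmetic difference is that the paper invokes that theorem in its stronger form $C(\vec{f}^\kappa)\le \zeta(\kappa,n)\,C(\vec{f})$ for \emph{every} $\vec{f}\in\set{F}$ and plugs in $\vec{f}=\vec{f}^0$ directly, whereas you bound against $\vec{f}^*$ and then multiply $C(\vec{f}^*)\le C(\vec{f}^0)$ by $\zeta(\kappa,n)>0$ --- equivalent given the positivity and crossover checks you supply (which the paper omits).
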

\begin{proof}[Proof of Lemma \ref{lem:bound_from_poa}]
From Theorem 2 of \citet{christodoulou2011performance}, we have
\begin{equation}\label{bound_c0}
	C(\vec{f}^\kappa) \leq \zeta(\kappa,n)C(\vec{f}) \quad \forall \vec{f}\in \set{F}.
\end{equation}
Picking $\vec{f}=\vec{f}^0$ in \eqref{bound_c0}, we obtain the upper bound on $C(\vec{f}^\kappa)$. 
Inequalities involving $C(\vec{f}^*)$ are from the fact $C(\vec{f}^*)\leq C(\vec{f})$ for all $\vec{f}\in \set{F}$.
\end{proof}

The bound in Lemma \ref{lem:bound_from_poa} is not tight when $\kappa$ is small. For example, when $\kappa = 0, C(\vec{f^\kappa})=C(\vec{f^0})$.
Thus, $\frac{C(\vec{f^\kappa})}{C(\vec{f^0})}=1$. On the other hand,  (19) yields the following:
\begin{align*}
\frac{C(\vec{f^\kappa})}{C(\vec{f^0})} \le \bigg(1-\frac{n}{(n+1)^{\frac{n+1}{n}}}\bigg)^{-1}
\end{align*}
where the expression on the right is strictly larger than one. For example, the expression
reduces to $\frac{4}{3}$ when $n=1$ and approaches infinity when $n$ is large.

In Lemma 3 of \citet{christodoulou2011performance}, the existence of a network instance with $C(\vec{f}^\kappa) = (1+\kappa)^{n+1} C(\vec{f}^*)$ is shown for $\kappa \geq (n+1)^{1/n} - 1$ via a circular network example presented in Figure \ref{fig:circular}.
\begin{figure}\centering
\begin{tikzpicture}
\tiny
\filldraw (-20mm,0) circle (2pt) (20mm,0) circle (2pt);
\filldraw (0,-20mm) circle (2pt) (0,20mm) circle (2pt);
\filldraw (17.3205mm,10mm) circle (2pt) (10mm,17.3205mm) circle (2pt);
\filldraw (-17.3205mm,10mm) circle (2pt) (-10mm,17.3205mm) circle (2pt);
\filldraw (17.3205mm,-10mm) circle (2pt) (10mm,-17.3205mm) circle (2pt);
\filldraw (-17.3205mm,-10mm) circle (2pt) (-10mm,-17.3205mm) circle (2pt);
\draw (0,0) circle (20mm);
\draw [dashed,->](17.736539mm,3.125667mm) arc (10:110:18mm);
\draw [dashed,->](17.736539mm,-3.125667mm) arc (-10:-230:18mm);
\node[] at (16mm,0) {$i$};
\node[] at (-8mm,14mm) {$i+m$};
\node[] at (0mm,23mm) {$m+k$};
\node[] at (18mm,13.3205mm) {$2$};
\node[] at (10mm,20.3205mm) {$1$};
\end{tikzpicture}
\caption{Circular Network of \citet{christodoulou2011performance}}
\label{fig:circular}
\end{figure}
The circular network includes $m+l$ nodes where positive integers $m$ and $l$ are chosen so that $\frac{m}{l}=1+\kappa$.
All nodes lie in a circle and each node $i$ is adjacent to two neighboring nodes via arc $(i, i+1)$ and $(i-1,i)$.
The arc cost function for arc $a$ is $t_a(v_a) = (v_a)^n$, where $v_a$ is the total arc flow in arc $a$.
There are $m+l$ OD pairs $(i, i+m)$ for $i=1,2,..,m+l$, with unit demand from node $i$ to node $i+m$
 (indices are taken cyclically). 
Note that the circular network can be easily converted to a directed network by replacing each undirected arc with two directed arcs with opposite direction. 
The cost associated with arc $a$ will be $t_a(v_a,\hat{v}_a) = (v_a+\hat{v}_a)^n$ in this case where $\hat{v}_a$ is the flow in the arc with opposite direction. 

For each OD pair $(i, i+m)$, there are two paths, clockwise and
counterclockwise. The former contains $m$ arcs and the latter has only $l$. Note
that our choice requires that $\frac{m}{l}=(1+\kappa)$ where $\kappa \ge 0$, i.e., $m \ge l$.
Consider the all-or-nothing strategy that sends the unit demand for every OD pair
along only one path, clockwise or counterclockwise. Using the clockwise
strategy, each arc has $m$ units of flows and costs $m^n$. Thus, the clockwise path
costs $m \cdot m^n$, while the cost of the counterclockwise one is $l \cdot m^n$. Because
$\frac{m \cdot m^n}{l \cdot m^n}=\frac{m}{l} \ge 1$, the clockwise strategy is not in a user equilibrium unless $m=l$.
For the counterclockwise strategy, each arc has $l$ units of flows instead and costs
$l^n$. Similarly, the clockwise and counterclockwise path cost $m \cdot l^n$ and $l \cdot l^n$,
respectively. Using the same reasoning as before, flow-bearing (or the
counterclockwise) paths are less expensive than or the same as the unused ones.
Thus, the counterclockwise strategy is in user equilibrium and yields
$(m+l) \cdot l^{n+1}$ as the total travel cost.
Consider MSatUE. Because it is PRUE, the counterclockwise strategy is
automatically in MSatUE. But, the clockwise one is also in MSatUE because the
cost of flow-bearing (or the clockwise) path is exactly $(1+\kappa)$ time the cost of
the shortest path, i.e., the counterclockwise one. Additionally, the total travel
cost of the clockwise strategy is $(m+l) \cdot m^{n+1}$. Then, the PoSat of this circular
network is
\begin{align*}
\frac{(m+l) \cdot m^{n+1}}{(m+l) \cdot l^{n+1}}=\Big(\frac{m}{l}\Big)^{n+1}=(1+\kappa)^{n+1}.
\end{align*}
Thus, the bound is tight. Now consider the system problem for the circular network. The problem objective is
to minimize $\sum_{a \in \Ac} v_a t_a(\vec{v})$. Using the counterclockwise strategy, the partial derivative
of the objective function with respect to $v_a$ is $(n+1)l^n$. Then, switching to the
clockwise path would increase the total travel cost by $(m - l)(n+1)l^n \ge 0$  per unit
flow. Thus, the counterclockwise strategy is system optimal because switching to the
unused path does not lead to a reduction in the total travel cost.
Observe that the $\UEPEX$ model can capture the satisficing behavior of network users in the circular network adequately.
Let us consider $\vec{\lambda}$ as follow as (indices are taken cyclically):
\begin{align*}
 \lambda^{w_i}_a= 
			\begin{cases} 
				\frac{1}{1+\kappa} & \text{for } a \in \{(j,j+1) : j=i, i+1, ..., i+m-1\}\\
				1 &  \text{for } a \in \{(j-1,j) : j=i, i-1, ..., i-l+1\}
			\end{cases}
\end{align*}
Under the above ${\vec{\lambda}}$, if all network users choose the clockwise path, the flow in each arc will be equal to $m$, the path cost for each OD pair will be $\frac{1}{1+k} m^{n+1}=lm^n$, which is equal to the cost of the alternative path, and thus the clockwise path is a solution to $\UEPEX$ and $\PoSat_\kappa = (1+\kappa)^{n+1}$.
In Section \ref{sec:numerical_PoSat}, we will compute the PoSat numerically for these examples to confirm that $\UEPEX$ is a useful model to find $\PoSat_\kappa$.

\begin{figure}\centering
\begin{subfigure}[t]{0.49\textwidth}
	\resizebox{\textwidth}{!}{\input{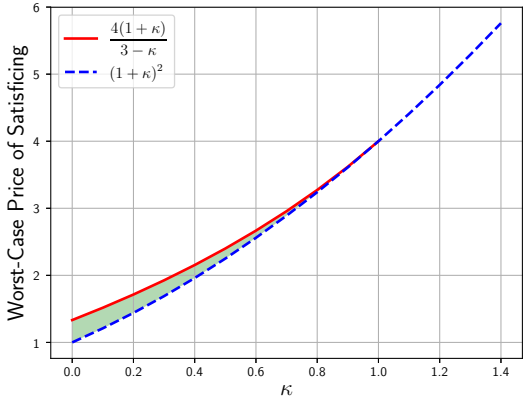}}
	\caption{$n=1$}
	\label{fig:ex1}
\end{subfigure}
\begin{subfigure}[t]{0.49\textwidth}
	\resizebox{1.03\textwidth}{!}{\input{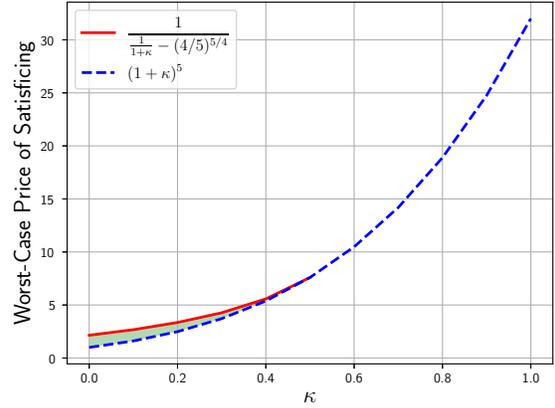}}
	\caption{$n=4$}
	\label{fig:ex2}
	\end{subfigure}
\caption{The worst-case price of satisficing for $n=1$ and $n=4$.
Note that when $n=1$, the right-hand-side of \eqref{pos} becomes $\frac{4(1+\kappa)}{3-\kappa}$.
For $n=1$, when $\kappa\geq 1$, we know for sure that the worst-case price of satisficing is exactly the dotted line. 
When $\kappa\leq 1$, the worst case falls in the shaded interval between the solid line and dotted line. 
When $n=4$, it is similar.}
\label{fig:pos}
\end{figure}

By Lemma \ref{lem:bound_from_poa}, when travel time functions are polynomials of degree $n$ with nonnegative coefficients, the PoSat is bounded as follows:
\begin{equation} \label{pos}
\PoSat_\kappa(G, \vec{Q}, \vec{t}) \leq
\begin{cases}
(1+\kappa)^{n+1} & \text{ if } \kappa \geq (n+1)^{1/n} - 1 ,\\
\bigg(\frac{1}{1+\kappa} - \frac{n}{(n+1)^{(n+1)/n}} \bigg)^{-1} & \text{ if } 0 \leq \kappa \leq (n+1)^{1/n} - 1.
\end{cases}		
\end{equation}
for all $(G, \vec{Q}, \vec{t}) \in \Omega(n) $ and by the circular network example in Figure \ref{fig:circular}, we know that there indeed exists a network instance $(G, \vec{Q}, \vec{t})\in\Omega(n)$ such that $\PoSat_\kappa(G, \vec{Q}, \vec{t}) = (1+\kappa)^{n+1} $ for all $\kappa \geq 0 $.
Therefore when $\kappa \geq (n+1)^{1/n} - 1$, the bound in \eqref{pos} is tight. 
Figure \ref{fig:ex1} shows the bounds in \eqref{pos} when travel time
functions are linear or when $n=1$.
For smaller $\kappa$ values, the worst-case PoSat falls in the shaded interval, while for larger $\kappa$ values, it is exactly $(1+\kappa)^2$.
Figure \ref{fig:ex2} shows the same bounds when
$n=4$ instead. 
When $\kappa$ is zero, we have $\vec{f}^\kappa=\vec{f}^0$; hence, we must have the PoSat approach to 1. 
With this observation, we naturally ask a question: Does $(1+\kappa)^{n+1}$ provide a tight bound on $\PoSat_\kappa$ for all $\kappa\geq 0$?
We present partial answers to this question in the following sections.

\subsection{Increased Travel Demands and Travel Time Functions}

We first define new sets of flow vectors. 
When the travel demand $Q_w$ for each $w\in\Wc$ is multiplied by the factor $1+\kappa$, we define
\begin{align*}
	\set{F}_{1+\kappa} &= \bigg\{ \vec{f} :  \sum_{p\in\Pc_w} f_p = (1+\kappa) Q_w \quad \forall w\in\Wc, \qquad f_p \geq 0 \quad \forall p\in\Pc \bigg \}, \\
	\set{V}_{1+\kappa} &= \bigg\{ \vec{v} : 	v_a = \sum_{p\in\Pc} \delta^p_a f_p \quad \forall a\in\Ac, \qquad \vec{f}\in \set{F}_{1+\kappa}  \bigg\}, \\
	\set{X}_{1+\kappa} &= \bigg\{ \vec{x} : x^w_a = \sum_{p\in\Pc_w} \delta^p_a f_p \quad \forall a\in\Ac,w\in\Wc \qquad \vec{f}\in \set{F}_{1+\kappa} \bigg\}.
\end{align*}
The above three sets can equivalently be written as follows:
\begin{align*}
	\set{F}_{1+\kappa} &= \{ (1+\kappa)\vec{f} : \vec{f}\in \set{F} \},  \\
	\set{V}_{1+\kappa} &= \{ (1+\kappa)\vec{v} : \vec{v}\in \set{V} \},  \\
	\set{X}_{1+\kappa} &= \{ (1+\kappa)\vec{x} : \vec{x}\in \set{X} \}.
\end{align*}
We will use `hat' for flow vectors in these sets, for example, $\vec{\hat{f}}^\kappa\in \set{F}_{1+\kappa}$, while without hat in the original sets as in $\vec{f}^\kappa\in \set{F}$.

We consider cases when the travel time functions $t_a(\cdot)$ are polynomials of order $n$, in particular, the following form of \emph{asymmetric} arc travel time function for each $a\in\Ac$:
\begin{align}
	t_a(\vec{v}) 
	   &= \sum_{m=0}^n {b}_{am} \bigg( \sum_{e\in\Ac} d_{aem} v_e \bigg)^m  \nonumber \\
	   &= \sum_{m=0}^n {b}_{am} \Big( \vec{d}_{am}^\top \vec{v} \Big)^m  \label{asymmetric_time}
\end{align}
for some constants ${b}_{am}$ for $m=0,1,...,n$ and $d_{aem}$ for $e\in\Ac$ and $m=0,1,...,n$.
Note that we use the vector form $\vec{d}_{am} = (d_{aem} : e\in\Ac)$.
The travel time function \eqref{asymmetric_time} is a general form of the travel time functions considered in the traffic equilibrium literature \citep{meng2014asymmetric, panicucci2007path}.
If $\vec{d}_{am}$ is a unit vector such that $d_{aem}$ is 1 if $a=e$ and 0 otherwise, we have a separable polynomial arc travel time function that has been used in the literature popularly \citep{christodoulou2011performance,roughgarden2002bad}:
\begin{equation} \label{separable}
	t_a(v_a) = \sum_{m=0}^n {b}_{am} ( v_a )^m = {b}_{a0} + {b}_{a1} v_a + {b}_{a2} (v_a)^2 + \cdots + {b}_{an} (v_a)^n.
\end{equation}

\begin{lemma} \label{lem:simple_bound}
With the polynomial travel time function \eqref{asymmetric_time}, for any $\vec{f} \in\set{F}$, we have
\begin{equation}
	C((1+\kappa)\vec{f}) \leq (1+\kappa)^{n+1} C(\vec{f})
\end{equation}
for all $\kappa \geq 0$ and $n\geq 0$.
\end{lemma}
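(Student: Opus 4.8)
The plan is to reduce the whole statement to a one-arc, term-by-term estimate, using that the total performance $C$ is assembled from the arc travel times and arc volumes and that scaling the path flow by $1+\kappa$ scales every arc flow by the same factor. First I would record this scaling: since $v_a = \sum_{p\in\Pc}\delta^p_a f_p$ is linear in $\vec{f}$, the transformation $(1+\kappa)\vec{f}\mapsto(1+\kappa)\vec{v}$ holds whenever $\vec{f}\mapsto\vec{v}$. Writing $C$ in its arc-based form then gives
\[
	C((1+\kappa)\vec{f}) = \sum_{a\in\Ac} t_a((1+\kappa)\vec{v})\,(1+\kappa)v_a,
\]
so it suffices to bound $t_a((1+\kappa)\vec{v})$ against $t_a(\vec{v})$ and afterwards collect the extra factor $1+\kappa$ carried by the volume $v_a$.

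Next I would exploit the positive homogeneity of each monomial appearing in \eqref{asymmetric_time}. Substituting $(1+\kappa)\vec{v}$ yields
\[
	t_a((1+\kappa)\vec{v}) = \sum_{m=0}^n b_{am}(1+\kappa)^m\big(\vec{d}_{am}^\top\vec{v}\big)^m .
\]
Because the coefficients $b_{am}$ are nonnegative, the base $\vec{d}_{am}^\top\vec{v}$ is nonnegative on feasible (hence nonnegative) flows, and $(1+\kappa)^m\le(1+\kappa)^n$ for all $0\le m\le n$ and $\kappa\ge 0$, each summand is dominated by $(1+\kappa)^n b_{am}\big(\vec{d}_{am}^\top\vec{v}\big)^m$. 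Summing over $m$ gives the key pointwise bound $t_a((1+\kappa)\vec{v})\le(1+\kappa)^n t_a(\vec{v})$ for every arc $a$.

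Finally I would substitute this bound back, pull the scalar through the sum, and recover $C(\vec{f})$:
\[
	C((1+\kappa)\vec{f}) \le \sum_{a\in\Ac}(1+\kappa)^n t_a(\vec{v})\,(1+\kappa)v_a = (1+\kappa)^{n+1}\sum_{a\in\Ac} t_a(\vec{v})v_a = (1+\kappa)^{n+1}C(\vec{f}),
\]
which is exactly the claimed inequality.

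The one place that genuinely needs care is the nonnegativity of the monomial base $\vec{d}_{am}^\top\vec{v}$: this is what legitimizes the term-by-term domination, since for an odd power $m$ a negative base would reverse the inequality $(1+\kappa)^m\big(\vec{d}_{am}^\top\vec{v}\big)^m\le(1+\kappa)^n\big(\vec{d}_{am}^\top\vec{v}\big)^m$. Thus the argument relies on the sign conventions built into the polynomial travel time function (nonnegative coefficients $b_{am}$ and $d_{aem}$) together with flow feasibility $\vec{v}\ge 0$. Everything else is the elementary observation that a degree-$n$ polynomial with nonnegative coefficients is inflated by at most a factor $(1+\kappa)^n$ under scaling of its argument by $1+\kappa\ge 1$, with one additional power of $1+\kappa$ contributed by the volume term itself.
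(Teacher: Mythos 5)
Your proof is correct and follows essentially the same route as the paper's: write $C((1+\kappa)\vec{f})$ in arc-based form, note the arc flows scale to $(1+\kappa)\vec{v}$, and bound each monomial term-by-term using $(1+\kappa)^m \leq (1+\kappa)^n$ together with the extra factor $(1+\kappa)$ from the volume, which the paper compresses into a single ``simple comparison'' chain of inequalities. Your explicit remark that the nonnegativity of the coefficients and of $\vec{d}_{am}^\top\vec{v}$ is what legitimizes the domination is a point the paper leaves implicit, and is a welcome clarification rather than a deviation.
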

\begin{proof}[Proof of Lemma \ref{lem:simple_bound}]
By simple comparison, we can show
\begin{align*}
	C((1+\kappa)\vec{f}) = Z((1+\kappa)\vec{v})
	& = \sum_{a\in\Ac}  \bigg( \sum_{m=0}^n {b}_{am} \Big( (1+\kappa) \vec{d}_{am}^\top \vec{v} \Big)^m  \bigg)  (1+\kappa) v_a \\
	& \leq (1+\kappa)^{n+1} \sum_{a \in \Ac} \bigg( \sum_{m=0}^n {b}_{am} \Big( \vec{d}_{am}^\top \vec{v} \Big)^m \bigg) v_a \\	
	& = (1+\kappa)^{n+1} Z(\vec{v}) \\
	& = (1+\kappa)^{n+1} C(\vec{f})
\end{align*}
where $\vec{v}$ is the arc flow vector from $\vec{f} \mapsto \vec{v}$.
\end{proof}

\subsection{Cases with Separable, Monomial Arc Travel Time Functions}

As a simple case, we consider separable, monomial functions of degree $n$ for arc travel time of the following form:
\begin{equation} \label{monomial}
	t_a(v_a) = b_{a} (v_a)^n 
\end{equation}
with a positive scalar $b_{a}$ for each $a\in\Ac$ and nonnegative constant $n$.

It is well known \citep{beckmann1956studies} that $\vec{v}^0\in\set{F}$ is a user equilibrium flow, if and only if it minimizes the following potential function
\[
	\Phi(\vec{v}) = \sum_{a\in\Ac} \int_0^{v_a} t_a(u) \du = \sum_{a\in\Ac} \frac{b_a}{n+1}(v_a)^{n+1}
\]
when the arc travel time functions are separable, so that the integral is well defined.
Similarly, $\vec{v}^\kappa\in\set{V}$ is a $\kappa$-MSatUE flow, if it is a solution to $\UEPEV$, or equivalently, if it minimizes the following potential function \citep{christodoulou2011performance}
\[
	\Psi(\vec{v};\vec{\lambda}) = \sum_{a\in\Ac} \int_0^{v_a} \lambda_a t_a(u) \du = \sum_{a\in\Ac} \frac{\lambda_a b_a}{n+1}(v_a)^{n+1}
\]
for some $\lambda_a \in [\frac{1}{1+\kappa},1]$ for each $a\in\Ac$.

When travel time functions are separable, we can show the following result \citep{englert2010sensitivity, takalloo_sensitivity}:

\begin{lemma} \label{lem:englert}
When the arc travel time functions are in the form of \eqref{separable}, 
let $\vec{f}^0\in\set{F}$ and $\vec{\hat{f}}^0\in\set{F}_{1+\kappa}$ be the PRUE flows with the corresponding travel demands.
We can show
\begin{equation}
C(\vec{\hat{f}}^0) \leq (1+\kappa)^{n+1} C(\vec{f}^0) 
\end{equation}
for all $\kappa\geq 0$ and $n\geq 0$.
\end{lemma}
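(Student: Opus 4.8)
The plan is to exploit the Beckmann potential (available because the $t_a$ are separable) and to track how the equilibrium cost grows as the common demand multiplier is increased \emph{continuously} from $1$ to $1+\kappa$, rather than comparing the two endpoints directly. Write $\mu=1+\kappa$ and, for $\mu\ge 1$, let $\vec v(\mu)$ be the (unique, by strict monotonicity) PRUE arc-flow for demand $\mu\vec Q$, so that $\vec v(\mu)/\mu\in\set V$, $\vec v(1)=\vec v^0$, and $\vec v(1+\kappa)=\hat{\vec v}^0$. Factoring out the scale, set $\vec v(\mu)=\mu\,\vec y(\mu)$ with $\vec y(\mu)\in\set V$; then $\vec y(\mu)$ solves the variational inequality $\sum_{a\in\Ac} t_a(v_a(\mu))\,(y_a-y_a(\mu))\ge 0$ for all $\vec y\in\set V$. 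Writing $Z(\mu):=Z(\vec v(\mu))$ for the total cost of that equilibrium (so $Z(1)=C(\vec f^0)$ and $Z(1+\kappa)=C(\hat{\vec f}^0)$), I aim to show that $Z(\mu)/\mu^{n+1}$ is non-increasing; integrating the resulting logarithmic derivative from $1$ to $1+\kappa$ then yields $C(\hat{\vec f}^0)\le(1+\kappa)^{n+1}C(\vec f^0)$ at once.

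The heart of the argument is a pair of first-order identities obtained by sensitivity analysis. Assuming $\vec y(\mu)$ is differentiable (see below) and writing $\dot{\vec y}$ for $\tfrac{d}{d\mu}\vec y(\mu)$, the feasibility of both $\vec y(\mu\pm d\mu)$ in $\set V$, combined with the variational inequality above (tested in the $+\dot{\vec y}$ and $-\dot{\vec y}$ directions), forces $\sum_{a} t_a(v_a)\dot y_a = 0$. Comparing the variational inequalities at $\mu$ and $\mu+d\mu$ in the standard Dafermos two-point fashion gives $\sum_{a} t_a'(v_a)\,\dot v_a\,\dot y_a\le 0$; substituting $\dot v_a=y_a+\mu\dot y_a$ and discarding the nonnegative term $\mu\sum_a t_a'(v_a)\dot y_a^2$ (here $t_a'\ge0$ because the coefficients in \eqref{separable} are nonnegative) leaves $\sum_{a} v_a\, t_a'(v_a)\,\dot y_a\le 0$.

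With these two facts the monotonicity is a short computation. Differentiating $Z(\mu)=\sum_a z_a(v_a(\mu))$ and using $z_a'(v)=t_a(v)+v\,t_a'(v)$,
\[
\mu\,\dot Z(\mu)=\sum_{a\in\Ac} z_a'(v_a)\,v_a+\mu^2\sum_{a\in\Ac}\big(t_a(v_a)+v_a t_a'(v_a)\big)\dot y_a \;\le\; \sum_{a\in\Ac} z_a'(v_a)\,v_a ,
\]
since the first identity kills the $t_a$-part of the second sum and the second identity bounds its $v_a t_a'$-part by $0$. Because $z_a'(v)\,v=\sum_{m=0}^n (m+1)b_{am}v^{m+1}\le (n+1)\sum_{m=0}^n b_{am}v^{m+1}=(n+1)z_a(v)$, the right-hand side is at most $(n+1)Z(\mu)$, so $\mu\,\dot Z(\mu)\le(n+1)Z(\mu)$, i.e. $\tfrac{d}{d\mu}\ln Z(\mu)\le (n+1)/\mu$. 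Integrating over $[1,1+\kappa]$ gives $\ln Z(1+\kappa)-\ln Z(1)\le (n+1)\ln(1+\kappa)$, which is exactly $C(\hat{\vec f}^0)\le(1+\kappa)^{n+1}C(\vec f^0)$.

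The main obstacle is the regularity needed to run the sensitivity step: $\vec y(\mu)$ need not be differentiable everywhere. I would handle this by noting that, under strict monotonicity, $\vec v(\mu)$ is unique and locally Lipschitz in $\mu$, so $Z(\mu)$ is locally Lipschitz and hence absolutely continuous; the differential inequality then only needs to hold on the full-measure set of differentiability points, and the integral bound on $\ln Z$ follows from the fundamental theorem of calculus for absolutely continuous functions. This local-Lipschitz sensitivity of Wardrop equilibria under demand scaling is precisely what the cited references establish, so the remaining work is to confirm that their hypotheses cover the polynomial family \eqref{separable} and to record the two first-order identities above carefully at smooth points.
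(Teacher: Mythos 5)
Your calculus core is correct where it applies, and it is a genuinely different route from the paper's: you run a homotopy in the demand multiplier and integrate the differential inequality $\mu\dot Z(\mu)\le(n+1)Z(\mu)$, whereas the paper never differentiates anything --- it compares the two endpoint equilibria directly through the Beckmann potential $\Phi$, using only that $\vec{v}^0$ minimizes $\Phi$ over $\set{V}$ while $\hat{\vec{v}}^0/(1+\kappa)\in\set{V}$, that $\hat{\vec{v}}^0$ minimizes $\Phi$ over $\set{V}_{1+\kappa}$ while $(1+\kappa)\vec{v}^0\in\set{V}_{1+\kappa}$, and then reaches a contradiction ($0<0$) from a fixed linear combination of the resulting scaling inequalities and the negation of the claim. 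The genuine gap in your argument sits exactly at the step you defer: the regularity of the equilibrium path $\mu\mapsto\vec{v}(\mu)$. Concretely: (i) the lemma is asserted for every cost family of the form \eqref{separable}, and such $t_a$ need not be strictly monotone (all coefficients with $m\ge 1$ may vanish), so $\vec{v}(\mu)$ need not be unique and your curve is not even well defined without a selection argument; (ii) more importantly, even granting strict monotonicity, strictness does \emph{not} yield local Lipschitz continuity of the solution map --- the Dafermos-type sensitivity results you would cite require \emph{strong} monotonicity (or a nonsingular-Jacobian condition), and the costs covered by the lemma violate it, e.g.\ $t_a(v_a)=(v_a)^n$ with $n\ge 2$ has $t_a'(0)=0$, so it is strictly but not strongly monotone precisely where flows vanish, which is where paths enter and leave the equilibrium support. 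So "confirming that their hypotheses cover \eqref{separable}" is not a citation check; it is the missing proof, and under the stated hypotheses it cannot be closed as written.

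If you want to salvage the homotopy, you must supply the regularity yourself, for instance by perturbing to $t_a(v_a)+\epsilon v_a$: this is strongly monotone with modulus $\epsilon$, so the solution path is Lipschitz in $\mu$ and your argument goes through with exponent $n+1$ for every $n\ge 1$ (the case $n=0$ is trivial, since then $Z(\mu)$ is exactly linear in $\mu$), after which you let $\epsilon\to 0$ and prove that the equilibrium cost is continuous in $\epsilon$. That is real additional work, all of which the paper's static two-endpoint potential comparison avoids, and it is why that proof covers the whole class without any smoothness or uniqueness discussion. Two minor repairs while you are at it: replace the logarithmic integration by the equivalent statement $\frac{d}{d\mu}\bigl(\mu^{-(n+1)}Z(\mu)\bigr)\le 0$, so you do not need $Z(\mu)>0$; and note that your two first-order identities, which are fine at differentiability points, should be stated for one-sided difference quotients so they mesh with an absolutely continuous (rather than $C^1$) path.
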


Although \citet{englert2010sensitivity} consider cases with a single OD pair only with interest in the changes in the path travel time, the same technique can be used to prove Lemma \ref{lem:englert} for cases with multiple OD pairs. 
For completeness, we include the proof to Lemma \ref{lem:englert} in the appendix.

Using Lemma \ref{lem:englert}, we show that a solution to $\UEPEV$ is an MSatUE flow.

\begin{theorem} \label{thm:monomial}
When the arc travel time functions are of the form \eqref{monomial},
let $\bar{\vec{v}}\in\set{V}$ be a solution to $\UEPEV$ and $\bar{\vec{f}}\in \set{F}$ is the any corresponding path flow with $\bar{\vec{v}}\anymapsto\bar{\vec{f}}$.
We let $\vec{\hat{f}}^0\in \set{F}_{1+\kappa}$ be the PRUE flows.
Then we have
$C(\vec{f}^\kappa) \leq C(\vec{\hat{f}}^0),$
and consequently $C(\vec{f}^\kappa) \leq (1+\kappa)^{n+1} C(\vec{f}^0)$ for all $\kappa \geq 0$.
\end{theorem}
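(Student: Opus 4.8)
The plan is to push everything through the separable potentials $\Phi$ and $\Psi(\cdot;\vec{\lambda})$ on $\set{V}$, exploiting that for a monomial cost the total system travel time is a fixed multiple of the Beckmann potential. Two preliminary facts set up the argument. First, by Lemma~\ref{lem:ue-pe-v} followed by Lemma~\ref{lem:ue-pe-x}, the $\UEPEV$ solution $\bar{\vec{v}}$ (equipped with $\lambda_a\in[\frac{1}{1+\kappa},1]$) yields a $\kappa$-MSatUE flow, so I may set $\vec{v}^\kappa=\bar{\vec{v}}$ and observe that $C(\vec{f}^\kappa)=C(\bar{\vec{v}})$ does not depend on the particular $\bar{\vec{v}}\anymapsto\bar{\vec{f}}$ chosen. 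Second, for $t_a(v_a)=b_a(v_a)^n$ the cost and the potential are proportional, $C(\vec{v})=\sum_{a\in\Ac}b_a(v_a)^{n+1}=(n+1)\Phi(\vec{v})$, and---as recalled just before the theorem---$\bar{\vec{v}}$ minimizes $\Psi(\cdot;\vec{\lambda})$ over $\set{V}$ for that same $\vec{\lambda}$.

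The core of the proof is the inequality $C(\bar{\vec{v}})\leq C(\vec{\hat{v}}^0)$, where $\vec{\hat{v}}^0$ is the arc flow of the scaled PRUE $\vec{\hat{f}}^0$. I would introduce the comparison point $\vec{u}=\frac{1}{1+\kappa}\vec{\hat{v}}^0$, which lies in $\set{V}$ precisely because $\vec{\hat{v}}^0\in\set{V}_{1+\kappa}=(1+\kappa)\set{V}$, and then chain:
\begin{align*}
\Phi(\bar{\vec{v}})
&\leq (1+\kappa)\,\Psi(\bar{\vec{v}};\vec{\lambda})
&& (\lambda_a\geq\tfrac{1}{1+\kappa}) \\
&\leq (1+\kappa)\,\Psi(\vec{u};\vec{\lambda})
&& (\bar{\vec{v}}\text{ minimizes }\Psi\text{ over }\set{V}) \\
&\leq (1+\kappa)\,\Phi(\vec{u})
&& (\lambda_a\leq 1) \\
&= (1+\kappa)^{-n}\,\Phi(\vec{\hat{v}}^0)
&& (\Phi\text{ homogeneous of degree }n+1).
\end{align*}
Multiplying by $n+1$ and using $(1+\kappa)^{-n}\leq 1$ turns this into $C(\bar{\vec{v}})\leq(1+\kappa)^{-n}C(\vec{\hat{v}}^0)\leq C(\vec{\hat{v}}^0)$, i.e.\ exactly $C(\vec{f}^\kappa)\leq C(\vec{\hat{f}}^0)$.

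To finish, I would apply Lemma~\ref{lem:englert} to the scaled PRUE, which for the separable form \eqref{separable} (and hence for \eqref{monomial}) gives $C(\vec{\hat{f}}^0)\leq(1+\kappa)^{n+1}C(\vec{f}^0)$; composing with the core inequality yields $C(\vec{f}^\kappa)\leq(1+\kappa)^{n+1}C(\vec{f}^0)$. The main obstacle is bookkeeping the two-sided bound $\lambda_a\in[\frac{1}{1+\kappa},1]$ correctly: the lower bound must be spent to return from $\Psi$ to $\Phi$ at $\bar{\vec{v}}$, the upper bound to pass from $\Psi$ to $\Phi$ at $\vec{u}$, and the degree-$(n+1)$ homogeneity of $\Phi$ must absorb the demand scaling so that the leftover factor collapses to $(1+\kappa)^{-n}\leq1$. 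I note that the same chain run directly against $\vec{v}^0$ rather than $\vec{u}$ already gives the sharper $C(\vec{f}^\kappa)\leq(1+\kappa)\,C(\vec{f}^0)$; I would still route the formal statement through Lemma~\ref{lem:englert} so that the monomial case is presented in parallel with the more general asymmetric analysis.
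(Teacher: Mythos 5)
Your proof is correct, and it rests on the same two pillars as the paper's own argument: the $\Psi$-minimality of the $\UEPEV$ solution over $\set{V}$, tested against the rescaled PRUE flow $\hat{\vec{v}}^0/(1+\kappa)\in\set{V}$, and Lemma~\ref{lem:englert} for the final passage from $C(\vec{\hat{f}}^0)$ to $(1+\kappa)^{n+1}C(\vec{f}^0)$. Where you genuinely diverge is in how the conclusion is extracted. The paper assembles three inequalities --- \eqref{A} from $\Phi$-optimality of $\hat{\vec{v}}^0$ over $\set{V}_{1+\kappa}$ tested against $(1+\kappa)\bar{\vec{v}}$, \eqref{B} from $\Psi$-minimality plus the two-sided bound on $\vec{\lambda}$, and \eqref{C} as the negation of the claim --- and reaches a contradiction through the linear combination $A\times\eqref{A}+B\times\eqref{B}+C\times\eqref{C}$ with carefully tuned positive constants. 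Your chain $\Phi(\bar{\vec{v}})\le(1+\kappa)\Psi(\bar{\vec{v}};\vec{\lambda})\le(1+\kappa)\Psi(\vec{u};\vec{\lambda})\le(1+\kappa)\Phi(\vec{u})=(1+\kappa)^{-n}\Phi(\hat{\vec{v}}^0)$ is precisely the paper's inequality \eqref{B}, but you then observe directly that, since $C=(n+1)\Phi$ for monomial costs and $(1+\kappa)^{-n}\le 1$, inequality \eqref{B} alone already gives $C(\bar{\vec{v}})\le C(\hat{\vec{v}}^0)$. This exposes the paper's inequality \eqref{A} and the entire contradiction apparatus as redundant, which is a real simplification, not just a rewording. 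Your closing remark is also correct and worth recording: running the same chain against $\vec{v}^0\in\set{V}$ instead of $\vec{u}$ yields $C(\bar{\vec{v}})\le(1+\kappa)\,C(\vec{v}^0)$, strictly sharper than $(1+\kappa)^{n+1}$ when $n\ge1$; this does not contradict the tightness of $(1+\kappa)^{n+1}$ in the circular network, because that worst case is realized by $\UEPEX$ with OD-dependent $\vec{\lambda}$, which lies outside the $\UEPEV$ class covered by this theorem.
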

\begin{proof}[Proof of Theorem \ref{thm:monomial}]

Since $\hat{\vec{v}}^0\in\set{F}_{1+\kappa}$ is an user equilibrium flow that minimizes $\Phi(\vec{\cdot})$, we have
\[
\Phi(\hat{\vec{v}}^0) \le \Phi \big( (1+\kappa)\bar{\vec{v}} \big),
\]
which implies
\begin{equation}\label{A}
\sum_{a \in \Ac} \frac{b_{a}(\hat{v}^0_a)^{n+1}}{n+1} 
\le \sum_{a \in \Ac}\frac{b_{a} \big( (1+\kappa)\bar{v}_a \big)^{n+1}}{n+1}
= (1+\kappa)^{n+1} \sum_{a \in \Ac} \frac{b_{a}(\bar{v}_a)^{n+1}}{n+1}.
\end{equation}

Since $\bar{\vec{v}}\in\set{V}$ is a solution to $\UEPEV$, we have
\[
\Psi(\bar{\vec{v}}; \vec{\lambda}) \le \Psi \Big(\frac{\hat{\vec{v}}^0}{1+\kappa}; \vec{\lambda} \Big),
\] 
for some $\vec{\lambda}$.
Therefore, we have
\[
\sum_{a \in \Ac} \frac{\lambda_a b_{a}(\bar{v}_a)^{n+1}}{n+1} 
\le \sum_{a \in \Ac}\frac{\lambda_a b_{a}(v^0_a)^{n+1}}{(n+1)(1+\kappa)^{n+1}}
= \frac{1}{(1+\kappa)^{n+1}} \sum_{a \in \Ac} \frac{\lambda_a b_{a}(\hat{v}^0_a)^{n+1}}{n+1}.
\]
Since $\lambda_a \in [\frac{1}{1+\kappa},1]$, we obtain
\[
\frac{1}{1+\kappa}\sum_{a \in \Ac} \frac{b_{a}(\bar{v}_a)^{n+1}}{n+1} 
\le \frac{1}{(1+\kappa)^{n+1}}\sum_{a \in \Ac} \frac{b_{a}(\hat{v}^0_a)^{n+1}}{n+1}
\]
which implies
\begin{equation}\label{B}
(1+\kappa)^{n}\sum_{a \in \Ac} \frac{b_{a}(\bar{v}_a)^{n+1}}{n+1} \le 
\sum_{a \in \Ac} \frac{b_{a}(\hat{v}^0_a)^{n+1}}{n+1}
\end{equation}

Let us assume that $C(\bar{\vec{f}}) > C(\hat{\vec{f}}^0)$,
which is equivalent to 
\begin{equation}\label{C}
\sum_{a \in \Ac} b_{a}(\hat{v}^0_a)^{n+1} <
\sum_{a \in \Ac} b_{a}(\bar{v}_a)^{n+1} 
\end{equation}

From $A\times\eqref{A} + B\times\eqref{B} + C\times\eqref{C}$ for any positive constants $A$, $B$ and $C$, we obtain
\begin{equation}\label{cond}
\theta_1 \sum_{a \in \Ac}\frac{b_{a}(\hat{v}^0_a)^{n+1}}{n+1} 
< \theta_2 \sum_{a \in \Ac}\frac{b_{a}(\bar{v}_a)^{n+1}}{n+1}
\end{equation}
where
\begin{align*}
\theta_1 &= A-B+C(n+1) \\
\theta_2 &= A(1+\kappa)^{n+1}- B(1+\kappa)^n+C(n+1).
\end{align*}
In particular, consider $A$, $B$ and $C$ as follows:
\begin{align*}
A &= (n+1) \big( (1+\kappa)^n - 1 \big) \\
B &= (n+1) \big( (1+\kappa)^n - 1 \big) + (n+1)\kappa(1+\kappa)^{n+1} \\
C &=\kappa(1+\kappa)^{n+1}
\end{align*}
We observe that $A$, $B$ and $C$ are all positive and $\theta_1=0$.
We also see that 
\[
	\theta_2 = -(n+1) \kappa^2 (1+\kappa)^n \big( (1+\kappa)^{n+1} - 1 \big) \leq 0
\]
for all $\kappa \geq 0$ and $n\geq 0$, which leads to a contradiction. 
Therefore, we have 
\[
C(\bar{\vec{f}}) \leq C(\vec{\hat{f}}^0) \leq (1+\kappa)^{n+1} C(\vec{f}^0),
\]
where the last inequality is from Lemma \ref{lem:englert}.
This completes the proof.
\end{proof}

Note that the bound obtained in Theorem \ref{thm:monomial} relies on the sufficient condition, not a necessary condition.
Therefore, the result is not applicable to all MSatUE flows, although it provides a useful bound in the framework of UE-PE models.

\subsection{Cases with Separable Arc Travel Time Functions}

We consider general polynomial, separable arc travel functions in the form of \eqref{separable}.

\begin{theorem} \label{thm:compare_separable}
Suppose that the arc travel time functions are in the form of \eqref{separable}.
Let $\vec{f}^\kappa\in \set{F}$ be any $\kappa$-MSatUE and $\vec{\hat{f}}^0\in \set{F}_{1+\kappa}$ be the PRUE flow. 
Suppose that $\kappa \geq 0$ is sufficiently small, in particular, so that
\begin{equation} \label{new_strange_condition}
	\sum_{p\in\Pc} [c_p(\vec{\hat{f}}^0) - c_p(\vec{f}^\kappa)] (\hat{f}^0_p - f^\kappa_p) 
	\geq 
  \kappa \sum_{p\in\Pc} c_p(\vec{f}^\kappa) \Big| \hat{f}^0_p - f^\kappa_p \Big| .	
\end{equation}
Then we have
$	C(\vec{f}^\kappa) \leq C(\vec{\hat{f}}^0). $
Consequently 
$	C(\vec{f}^\kappa) \leq (1+\kappa)^{n+1} C(\vec{f}^0),$
and 
\[
	\sup_{(G, \vec{Q}, \vec{t}) \in \Omega(n)} \PoSat_\kappa(G, \vec{Q}, \vec{t}) = (1+\kappa)^{n+1}.
\]
\end{theorem}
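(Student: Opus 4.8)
The plan is to reduce everything to the single inequality $C(\vec{f}^\kappa)\le C(\vec{\hat{f}}^0)$. Once that is in hand, Lemma~\ref{lem:englert} supplies $C(\vec{\hat{f}}^0)\le(1+\kappa)^{n+1}C(\vec{f}^0)$, so $C(\vec{f}^\kappa)\le(1+\kappa)^{n+1}C(\vec{f}^0)$ and hence $\PoSat_\kappa(G,\vec{Q},\vec{t})\le(1+\kappa)^{n+1}$ on every instance satisfying \eqref{new_strange_condition}. The matching lower bound is already available: the circular network of Figure~\ref{fig:circular} lies in $\Omega(n)$ and attains $\PoSat_\kappa=(1+\kappa)^{n+1}$ for every $\kappa\ge 0$. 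The upper bound furnished by the theorem and this unconditional lower bound coincide, which identifies the worst case and yields $\sup_{\Omega(n)}\PoSat_\kappa=(1+\kappa)^{n+1}$.

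To prove $C(\vec{f}^\kappa)\le C(\vec{\hat{f}}^0)$ I would first record the two equilibrium conditions at their natural test points. Writing the necessary condition of Lemma~\ref{lem:necessary} in path form as $\sum_{p\in\Pc}c_p(\vec{f}^\kappa)\big((1+\kappa)f_p-f^\kappa_p\big)\ge 0$ for all $\vec{f}\in\set{F}$, and evaluating at $\vec{f}=\vec{\hat{f}}^0/(1+\kappa)\in\set{F}$, gives $\sum_{p\in\Pc}c_p(\vec{f}^\kappa)(\hat{f}^0_p-f^\kappa_p)\ge 0$. Since $\vec{\hat{f}}^0$ is the $\PRUE$ for the inflated demand, its variational inequality over $\set{F}_{1+\kappa}$, evaluated at $(1+\kappa)\vec{f}^\kappa\in\set{F}_{1+\kappa}$, gives $\sum_{p\in\Pc}c_p(\vec{\hat{f}}^0)\big((1+\kappa)f^\kappa_p-\hat{f}^0_p\big)\ge 0$. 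For separable costs I would also keep in reserve the sharper Beckmann statement $\Phi(\vec{\hat{v}}^0)\le\Phi((1+\kappa)\vec{v}^\kappa)$, which is strictly stronger than the variational inequality and encodes the polynomial scaling of the potential.

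The core step is the exact identity
\[
C(\vec{\hat{f}}^0)-C(\vec{f}^\kappa)
= G
+ \sum_{p\in\Pc} c_p(\vec{f}^\kappa)\big(\hat{f}^0_p - f^\kappa_p\big)
+ \sum_{a\in\Ac}\big[t_a(\hat{v}^0_a) - t_a(v^\kappa_a)\big]\,v^\kappa_a ,
\]
where $G=\sum_{p\in\Pc}\big[c_p(\vec{\hat{f}}^0)-c_p(\vec{f}^\kappa)\big]\big(\hat{f}^0_p - f^\kappa_p\big)$ is the monotonicity term. The middle sum is exactly the quantity shown $\ge 0$ above. The hypothesis \eqref{new_strange_condition} states precisely that $G\ge\kappa\sum_{p}c_p(\vec{f}^\kappa)\,|\hat{f}^0_p-f^\kappa_p|\ge 0$. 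Hence the first two terms are nonnegative, and proving the theorem amounts to dominating the last sum, $W:=\sum_{a}[t_a(\hat{v}^0_a)-t_a(v^\kappa_a)]v^\kappa_a$, by $G$.

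The main obstacle is that $W$ has indefinite sign: under demand inflation the per-arc flows need not all increase, so arcs with $\hat{v}^0_a<v^\kappa_a$ contribute negatively. I would emphasize that the two variational inequalities together with \eqref{new_strange_condition} are, as abstract scalar inequalities, \emph{not} enough to force $C(\vec{f}^\kappa)\le C(\vec{\hat{f}}^0)$ — one can satisfy all three while violating the conclusion — so the separable polynomial structure must be used to exclude such configurations. Concretely, I would bound the negative part of $W$ by exploiting that each $t_a$ is nondecreasing (so $[t_a(\hat{v}^0_a)-t_a(v^\kappa_a)](\hat{v}^0_a-v^\kappa_a)\ge 0$ arcwise) together with the Beckmann inequality $\Phi(\vec{\hat{v}}^0)\le\Phi((1+\kappa)\vec{v}^\kappa)$ and, if needed, the scaling estimate of Lemma~\ref{lem:simple_bound}; I would then invoke \eqref{new_strange_condition} to show the monotonicity surplus $G$ outweighs this residual. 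This is the exact sense in which ``$\kappa$ sufficiently small'' is used: it guarantees $G$ is large enough (relative to the cost-weighted flow displacement $\sum_p c_p(\vec{f}^\kappa)|\hat{f}^0_p-f^\kappa_p|$) to absorb $W$, giving $G+W\ge 0$ and hence $C(\vec{f}^\kappa)\le C(\vec{\hat{f}}^0)$.
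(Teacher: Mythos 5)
Your skeleton matches the paper's: prove $C(\vec{f}^\kappa)\le C(\vec{\hat{f}}^0)$ first, then apply Lemma \ref{lem:englert} to get $C(\vec{f}^\kappa)\le(1+\kappa)^{n+1}C(\vec{f}^0)$, and obtain tightness from the circular network. The auxiliary facts you record are also correct: Lemma \ref{lem:necessary} evaluated at $\vec{\hat{f}}^0/(1+\kappa)\in\set{F}$ does give $\sum_{p}c_p(\vec{f}^\kappa)(\hat{f}^0_p-f^\kappa_p)\ge 0$, the PRUE variational inequality at $(1+\kappa)\vec{f}^\kappa$ is valid, and your identity $C(\vec{\hat{f}}^0)-C(\vec{f}^\kappa)=G+\sum_{p}c_p(\vec{f}^\kappa)(\hat{f}^0_p-f^\kappa_p)+W$ with $W=\sum_{a}[t_a(\hat{v}^0_a)-t_a(v^\kappa_a)]v^\kappa_a$ checks out. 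But the heart of the theorem---showing that $G$ absorbs the negative part of $W$---is exactly what you never prove. You state the obstacle honestly and then offer only a plan (``arcwise monotonicity plus the Beckmann inequality plus Lemma \ref{lem:simple_bound}, then invoke \eqref{new_strange_condition}''), with no argument that these tools can deliver the bound. Condition \eqref{new_strange_condition} lower-bounds $G$ by a \emph{path-cost-weighted flow displacement}, whereas the negative part of $W$ is an \emph{arc-level} quantity: travel-time differences weighted by $v^\kappa_a$. Nothing you write relates these two quantities, and it is not apparent that they can be related by the cited tools; this is a genuine gap at the central step, not a routine detail.

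The gap also reflects a misdiagnosis of what extra structure is needed. The paper's proof (the $\sigma=0$ specialization of the argument for Theorem \ref{thm:compare_asymmetric}) never forms $W$ at all and uses no separability or polynomial structure in this step. It works with the disaggregate equilibrium definitions: writing $\mu_w(\cdot)$ for the minimum path cost of OD pair $w$, it splits the paths of each OD pair into four classes according to the signs of $\hat{f}^0_p$, $f^\kappa_p$, and $\hat{f}^0_p-f^\kappa_p$, and replaces $c_p$ by $\mu_w$ using the one-sided bounds $c_p(\vec{\hat{f}}^0)=\mu_w(\vec{\hat{f}}^0)$ on used PRUE paths and $c_p(\vec{f}^\kappa)\le(1+\kappa)\mu_w(\vec{f}^\kappa)$ on used MSatUE paths. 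The total error committed in this replacement is at most $\kappa\sum_{p}c_p(\vec{f}^\kappa)\,\lvert\hat{f}^0_p-f^\kappa_p\rvert$, which is precisely the quantity that \eqref{new_strange_condition} cancels. What survives is $\sum_{w}\bigl[\mu_w(\vec{\hat{f}}^0)-\mu_w(\vec{f}^\kappa)\bigr]\sum_{p\in\Pc_w}(\hat{f}^0_p-f^\kappa_p)\ge 0$, and since $\sum_{p\in\Pc_w}(\hat{f}^0_p-f^\kappa_p)=\kappa Q_w$, this yields $\sum_{w}\mu_w(\vec{\hat{f}}^0)Q_w\ge\sum_{w}\mu_w(\vec{f}^\kappa)Q_w$; converting back via $\mu_w(\vec{\hat{f}}^0)\hat{Q}_w=\sum_{p\in\Pc_w}c_p(\vec{\hat{f}}^0)\hat{f}^0_p$ and $(1+\kappa)\mu_w(\vec{f}^\kappa)Q_w\ge\sum_{p\in\Pc_w}c_p(\vec{f}^\kappa)f^\kappa_p$ gives $C(\vec{f}^\kappa)\le C(\vec{\hat{f}}^0)$. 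So the information needed beyond your aggregate inequalities is not the ``separable polynomial structure'' (that enters only through Lemma \ref{lem:englert}), but the per-OD, per-path satisficing and equilibrium conditions, which your aggregated variational inequalities discard. If you repair your argument along these lines, you will in effect be reproducing the paper's proof.
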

\begin{proof}[Proof of Theorem \ref{thm:compare_separable}]
By slightly modifying the proof of Theorem \ref{thm:compare_asymmetric}, we can show $C(\vec{f}^\kappa) \leq C(\vec{\hat{f}}^0).$
By Lemmas \ref{lem:simple_bound} and \ref{lem:englert}, we complete the proof.
\end{proof}

Theorem \ref{thm:compare_separable} depends on condition \eqref{new_strange_condition} and a similar condition appears in general asymmetric cases as in Theorem \ref{thm:compare_asymmetric}.
We discuss this condition in Section \ref{sec:illustrative}.

\subsection{General Cases with Asymmetric Arc Travel Time Functions}

We consider asymmetric arc travel time functions \eqref{asymmetric_time}, in which case Lemma \ref{lem:englert} is not applicable.
We first observe that the multiple of a PRUE flow, $(1+\kappa) \vec{f}^0$, provides a satisficing solution to the traffic equilibrium problem with the increased travel demand.

\begin{lemma} \label{lem:UE_MSatUE}
Suppose $t_a(\cdot)$ are polynomials of order $n$ as defined \eqref{asymmetric_time}.
If $\vec{f}^0\in \set{F}$ is a PRUE flow, then $(1+\kappa)\vec{f}^0$ is a $\sigma$-MSatUE flow with $\sigma = (1+\kappa)^n-1$ in $\set{F}_{1+\kappa}$. 
When $n=1$, we have $\sigma=\kappa$.
\end{lemma}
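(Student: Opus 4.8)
The plan is to verify the defining inequality \eqref{MSat_cond} for the scaled flow $(1+\kappa)\vec{f}^0$ directly, by comparing its path costs against the PRUE costs at $\vec{f}^0$. Write $\vec{v}^0$ for the arc flow with $\vec{f}^0\mapsto\vec{v}^0$, so that $(1+\kappa)\vec{f}^0\in\set{F}_{1+\kappa}$ has corresponding arc flow $(1+\kappa)\vec{v}^0$. First I would note that the support is unchanged, since $(1+\kappa)f_p^0>0$ iff $f_p^0>0$; hence the paths on which the $\sigma$-satisficing condition must be checked are exactly the PRUE-used paths.

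The first real step is an arcwise two-sided estimate. Substituting $(1+\kappa)\vec{v}^0$ into \eqref{asymmetric_time} gives
\[
  t_a\big((1+\kappa)\vec{v}^0\big)=\sum_{m=0}^n b_{am}(1+\kappa)^m\big(\vec{d}_{am}^\top\vec{v}^0\big)^m ,
\]
and since every coefficient $b_{am}$ is nonnegative and each $\vec{d}_{am}^\top\vec{v}^0=\sum_{e}d_{aem}v_e^0$ is nonnegative (nonnegative coefficients, nonnegative flow), every summand is nonnegative. Because $1\le(1+\kappa)^m\le(1+\kappa)^n$ for $0\le m\le n$ when $\kappa\ge0$, squeezing these scalar factors out of the nonnegative summands will yield
\[
  t_a(\vec{v}^0)\ \le\ t_a\big((1+\kappa)\vec{v}^0\big)\ \le\ (1+\kappa)^n\, t_a(\vec{v}^0)\qquad\forall a\in\Ac .
\]

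Next I would lift these arc bounds to path costs through $c_p(\vec{f})=\sum_{a}\delta_a^p t_a(\vec{v})$: the upper bound gives $c_p((1+\kappa)\vec{f}^0)\le(1+\kappa)^n c_p(\vec{f}^0)$ for every path, and the lower bound gives $c_{p'}((1+\kappa)\vec{f}^0)\ge c_{p'}(\vec{f}^0)$, hence $\min_{p'\in\Pc_w}c_{p'}((1+\kappa)\vec{f}^0)\ge\min_{p'\in\Pc_w}c_{p'}(\vec{f}^0)$. Invoking the PRUE condition \eqref{UE_cond}, a used path $p\in\Pc_w$ satisfies $c_p(\vec{f}^0)=\min_{p'\in\Pc_w}c_{p'}(\vec{f}^0)$, so chaining the three facts,
\[
  c_p\big((1+\kappa)\vec{f}^0\big)\le(1+\kappa)^n c_p(\vec{f}^0)=(1+\kappa)^n\min_{p'\in\Pc_w}c_{p'}(\vec{f}^0)\le(1+\kappa)^n\min_{p'\in\Pc_w}c_{p'}\big((1+\kappa)\vec{f}^0\big),
\]
which is exactly \eqref{MSat_cond} with multiplicative factor $1+\sigma=(1+\kappa)^n$, i.e. $\sigma=(1+\kappa)^n-1$; setting $n=1$ gives $\sigma=\kappa$.

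I do not expect a genuine obstacle: the argument is essentially bookkeeping that keeps the two inequality directions straight. The only places the hypotheses are indispensable are (i) the nonnegativity of the $b_{am}$ and of $\vec{d}_{am}^\top\vec{v}^0$, which makes each monomial nonnegative so that the factor $(1+\kappa)^m$ can be squeezed between $1$ and $(1+\kappa)^n$, and (ii) the optimality of the PRUE flow, which equates the used-path cost with the per-OD minimum and thereby lets the upper bound on the used path be glued to the lower bound on the minimum. If negative coefficients were permitted the arcwise two-sided estimate would fail and the whole chain would break, so the subtle step to watch is really just confirming the sign conditions that underlie the squeeze.
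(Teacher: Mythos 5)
Your proof is correct, but it follows a genuinely different route from the paper's. The paper stays inside the variational-inequality/perception-error framework: starting from the PRUE variational inequality for $\vec{v}^0$ over $\set{V}$, it rescales variables to obtain an inequality of the form \eqref{MSatPP} over $\set{V}_{1+\kappa}$ with monomial-wise multipliers $\lambda_{am}=(1+\kappa)^{-m}\in[\tfrac{1}{1+\sigma},1]$, aggregates these into arc-level factors $\lambda_a$ (a weighted average, which lands in the same interval precisely because the monomials $b_{am}(\vec{d}_{am}^\top\bar{\vec{v}})^m$ are nonnegative), and then invokes the implication chain $\UEPEV \implies \UEPEX \implies \MSatUE$ of Lemmas \ref{lem:ue-pe-v} and \ref{lem:ue-pe-x}. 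You instead verify the path-based definition \eqref{MSat_cond} directly, via the arcwise squeeze $t_a(\vec{v}^0)\le t_a((1+\kappa)\vec{v}^0)\le(1+\kappa)^n t_a(\vec{v}^0)$ combined with the PRUE equality of used-path costs. Your argument is more elementary and self-contained --- it needs neither the UE-PE machinery nor any variational inequality, and it makes the role of the nonnegativity hypotheses maximally transparent. What the paper's route buys in exchange is a slightly stronger conclusion: it shows that $(1+\kappa)\vec{f}^0$ is not merely an MSatUE flow but a solution of the perception-error model itself, i.e., it lies in the subset of MSatUE flows that the $\UEPEX$-based numerical scheme of Section \ref{sec:numerical_PoSat} can actually reach, which matters for the paper's later use of this lemma. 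Both proofs rely on the same sign conditions ($b_{am}\ge 0$ and $\vec{d}_{am}^\top\vec{v}\ge 0$); you use them in the squeeze, the paper uses them to keep the aggregated $\lambda_a$ inside $[\tfrac{1}{1+\sigma},1]$.
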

\begin{proof}
Let $\vec{\bar{f}} = (1+\kappa)\vec{f}^0$, and $\vec{\bar{v}}=(1+\kappa)\vec{v}^0$ for the corresponding arc flow vectors.
If the condition 
\begin{equation}\label{MSatPP}
	\sum_{a\in\Ac} \bigg( \sum_{m=0}^n \lambda_{am} {b}_{am} \Big( \vec{d}_{am}^\top \bar{\vec{v}} \Big)^m  \bigg) ( v'_a - \bar{v}_a ) \geq 0 \qquad \forall \vec{v}' \in \set{V}_{1+\kappa}
\end{equation}
holds for some constants $\lambda_{am} \in [\frac{1}{1+\sigma},1]$ for $m=0,1,...,n$ and $a\in\Ac$,
then we can find $\lambda_a \in [\frac{1}{1+\sigma},1]$ such that
\[
	\lambda_a \sum_{m=0}^n {b}_{am} \Big( \vec{d}_{am}^\top \bar{\vec{v}} \Big)^m   = \sum_{m=0}^n \lambda_{am} {b}_{am} \Big( \vec{d}_{am}^\top \bar{\vec{v}} \Big)^m  
\]
for all $a\in\Ac$;
consequently, by Lemmas \ref{lem:ue-pe-x} and \ref{lem:ue-pe-v}, $\bar{\vec{f}}$ is a $\sigma$-MSatUE flow in $\set{F}_{1+\kappa}$.

Since $\vec{v}^0$ is PRUE for $\set{V}$, we know that
\[
	\sum_{a\in\Ac} \bigg( \sum_{m=0}^n {b}_{am} \Big( \vec{d}_{am}^\top \vec{v}^0 \Big)^m  \bigg) ( v_a - v^0_a ) \geq 0 \qquad \forall \vec{v}\in \set{V}.
\]
Therefore
\[
	\sum_{a\in\Ac} \bigg( \sum_{m=0}^n \frac{1}{(1+\kappa)^m} {b}_{am} \Big( (1+\kappa) \vec{d}_{am}^\top \vec{v}^0 \Big)^m  \bigg) ( (1+\kappa) v_a - (1+\kappa) v^0_a ) \geq 0 \qquad \forall \vec{v}\in \set{V}.
\]
Letting for all $a\in\Ac$
\begin{align*}
	\lambda_{am} & = \frac{1}{(1+\kappa)^m}, \qquad m=0,1,...,n \\
	\bar{v}_a &= (1+\kappa) v^0_a, \\
	v'_a &= (1+\kappa) v_a, 
\end{align*}
we observe that $\lambda_{am} \in [\frac{1}{1+\sigma}, 1]$ and we obtain \eqref{MSatPP}; hence proof.
\end{proof}

By introducing an additional condition, we compare MSatUE flows with the proportional travel demand increase, and obtain the worst-case bound of PoSat.

\begin{theorem} \label{thm:compare_asymmetric}
Let $\vec{f}^\kappa\in \set{F}$ be any $\kappa$-MSatUE and $\vec{\hat{f}}^\sigma\in \set{F}_{1+\kappa}$ be any $\sigma$-MSatUE flows with the corresponding travel demands, when $\sigma = (1+\kappa)^n-1$. 
Suppose that $\kappa \geq 0$ is sufficiently small, in particular, so that
\begin{equation} \label{strange_condition}
	\sum_{p\in\Pc} [c_p(\vec{\hat{f}}^\sigma) - c_p(\vec{f}^\kappa)] (\hat{f}^\sigma_p - f^\kappa_p) 
	\geq 
  \sigma \sum_{p\in\Pc} \max\{ c_p(\vec{\hat{f}}^\sigma), c_p(\vec{f}^\kappa) \} \Big| \hat{f}^\sigma_p - f^\kappa_p \Big| .	
\end{equation}
Then we have
$
	C(\vec{f}^\kappa) \leq C(\vec{\hat{f}}^\sigma).
$
Consequently 
$	C(\vec{f}^\kappa) \leq (1+\kappa)^{n+1} C(\vec{f}^0),$
and 
\[
\sup_{(G, \vec{Q}, \vec{t}) \in \Omega(n)} \PoSat_\kappa(G, \vec{Q}, \vec{t}) = (1+\kappa)^{n+1}.
\]
\end{theorem}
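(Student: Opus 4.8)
The plan is to establish the core inequality $C(\vec{f}^\kappa)\le C(\vec{\hat{f}}^\sigma)$ first; the two ``Consequently'' claims and the supremum statement then follow quickly from Lemmas \ref{lem:UE_MSatUE} and \ref{lem:simple_bound} together with the circular-network example. Throughout I abbreviate $c_p = c_p(\vec{f}^\kappa)$ and $\bar{c}_p = c_p(\vec{\hat{f}}^\sigma)$. The first step is to extract a path-form approximate-equilibrium inequality from the $\kappa$-MSatUE definition, exactly along the lines of the min-cost (Wardrop) argument underlying Lemma \ref{lem:necessary}: since $\tfrac{1}{1+\kappa}\vec{\hat{f}}^\sigma\in\set{F}$ is a feasible competitor and flow-bearing paths of $\vec{f}^\kappa$ cost at most $(1+\kappa)$ times the per-OD minimum, I would obtain
\[
	\sum_{p\in\Pc} c_p\,(\hat{f}^\sigma_p - f^\kappa_p) \ge 0 .
\]

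Next I would introduce the bilinear ``monotonicity-gap'' identity
\[
	\sum_{p\in\Pc}(\bar{c}_p - c_p)(\hat{f}^\sigma_p - f^\kappa_p)
	= C(\vec{\hat{f}}^\sigma) + C(\vec{f}^\kappa) - \sum_{p\in\Pc}\bar{c}_p f^\kappa_p - \sum_{p\in\Pc} c_p \hat{f}^\sigma_p ,
\]
which is just an expansion of the inner product. Substituting the inequality from the first step to dispose of the term $\sum_p c_p \hat{f}^\sigma_p$, and then invoking the hypothesis \eqref{strange_condition}, which lower-bounds the left-hand side of the identity by $\sigma\sum_p\max\{\bar{c}_p,c_p\}\,|\hat{f}^\sigma_p - f^\kappa_p|$, the $\sigma$-weighted correction terms cancel and what remains is precisely $C(\vec{\hat{f}}^\sigma) - C(\vec{f}^\kappa)\ge 0$. \emph{I expect this to be the main obstacle:} the genuinely delicate point is controlling the cross term $\sum_p(\bar{c}_p - c_p)f^\kappa_p$, i.e.\ comparing the two cost vectors weighted against the same flow. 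Because $\vec{f}^\kappa$ and $\vec{\hat{f}}^\sigma$ are only \emph{approximate} equilibria, the naive substitutions lose the $(1+\kappa)$-factors that an exact Dafermos-type sensitivity argument would retain, and condition \eqref{strange_condition} is exactly what is engineered to absorb this slack — which is why the conclusion is available only for $\kappa$ small enough that the monotonicity gap dominates the approximation error.

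With the core inequality in hand, the PRUE bound follows by a specific choice of comparison flow. I would take $\vec{\hat{f}}^\sigma = (1+\kappa)\vec{f}^0$; by Lemma \ref{lem:UE_MSatUE} this is a $\sigma$-MSatUE flow in $\set{F}_{1+\kappa}$ with $\sigma = (1+\kappa)^n - 1$, so the core inequality yields $C(\vec{f}^\kappa)\le C\big((1+\kappa)\vec{f}^0\big)$, and Lemma \ref{lem:simple_bound} gives $C\big((1+\kappa)\vec{f}^0\big)\le (1+\kappa)^{n+1} C(\vec{f}^0)$. Chaining these two bounds produces $C(\vec{f}^\kappa)\le (1+\kappa)^{n+1} C(\vec{f}^0)$, i.e.\ $\PoSat_\kappa(G,\vec{Q},\vec{t})\le (1+\kappa)^{n+1}$ for every instance satisfying the hypothesis.

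Finally, for the supremum I would combine this upper bound with achievability. The circular network of Figure \ref{fig:circular} is an instance in $\Omega(n)$ on which the clockwise all-or-nothing pattern is a $\kappa$-MSatUE while the counterclockwise pattern is the PRUE, giving $\PoSat_\kappa = (m/l)^{n+1} = (1+\kappa)^{n+1}$ for every $\kappa\ge 0$. Since no instance can exceed $(1+\kappa)^{n+1}$ and this value is attained, I conclude
\[
	\sup_{(G,\vec{Q},\vec{t})\in\Omega(n)} \PoSat_\kappa(G,\vec{Q},\vec{t}) = (1+\kappa)^{n+1},
\]
completing the proof.
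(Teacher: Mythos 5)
Your opening and closing steps are fine: the inequality $\sum_{p\in\Pc} c_p(\vec{f}^\kappa)\,(\hat{f}^\sigma_p - f^\kappa_p)\ge 0$ does follow from Lemma \ref{lem:necessary} applied to the competitor $\tfrac{1}{1+\kappa}\vec{\hat{f}}^\sigma\in\set{F}$, and deriving the ``Consequently'' claims from Lemmas \ref{lem:UE_MSatUE} and \ref{lem:simple_bound} plus the circular network is exactly the paper's route. The gap is the sentence ``the $\sigma$-weighted correction terms cancel.'' What your identity, your step-1 inequality, and hypothesis \eqref{strange_condition} actually give is
\begin{equation*}
\sum_{p\in\Pc}c_p(\vec{\hat{f}}^\sigma)\,f^\kappa_p \;+\; \sigma\sum_{p\in\Pc}\max\{c_p(\vec{\hat{f}}^\sigma),c_p(\vec{f}^\kappa)\}\,|\hat{f}^\sigma_p-f^\kappa_p| \;\le\; C(\vec{\hat{f}}^\sigma),
\end{equation*}
and to reach $C(\vec{f}^\kappa)\le C(\vec{\hat{f}}^\sigma)$ you would still need $\sum_{p\in\Pc}[c_p(\vec{f}^\kappa)-c_p(\vec{\hat{f}}^\sigma)]f^\kappa_p$ to be dominated by the $\sigma$-weighted sum --- precisely the cross term you flag as ``the main obstacle,'' and nothing in your argument bounds it. This is not repairable within your framework, because you never use the $\sigma$-MSatUE property of $\vec{\hat{f}}^\sigma$, and without it the conclusion is false. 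Concretely: one OD pair, two parallel arcs, $Q=1$, $n=1$ (so $\sigma=\kappa$), $\kappa=0.2$, asymmetric linear costs $t_1(\vec{v})=v_1$ and $t_2(\vec{v})=v_1+\tfrac{7}{12}v_2$ (nonnegative coefficients, so the instance is in $\Omega(1)$). Then $\vec{f}^\kappa=(1,0)$ is a PRUE, hence a $\kappa$-MSatUE, with path costs $(1,1)$ and $C(\vec{f}^\kappa)=1$; the flow $\vec{\hat{f}}^\sigma=(0,1.2)\in\set{F}_{1+\kappa}$ has path costs $(0,0.7)$ and $C(\vec{\hat{f}}^\sigma)=0.84$. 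Your step-1 inequality holds ($-1+1.2=0.2\ge 0$), condition \eqref{strange_condition} holds ($0.64 \ge 0.2\,(1+1.2)=0.44$), yet $C(\vec{f}^\kappa)=1>0.84=C(\vec{\hat{f}}^\sigma)$. This does not contradict the theorem, because this $\vec{\hat{f}}^\sigma$ is not a $\sigma$-MSatUE (its flow-bearing path costs $0.7>(1+\sigma)\cdot 0$); but it shows that the facts you assembled are consistent with the failure of the conclusion, so no algebraic completion of your argument exists.

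The paper's proof is built around the ingredient you omit. Writing $\mu_w(\cdot)$ for the minimum path cost of OD pair $w$, it splits each $\Pc_w$ into four subsets according to which of $\hat{f}^\sigma_p$, $f^\kappa_p$ are positive and the sign of $\hat{f}^\sigma_p-f^\kappa_p$, and on each subset replaces $c_p(\vec{\hat{f}}^\sigma)$ and $c_p(\vec{f}^\kappa)$ by $(1+\sigma)\mu_w(\vec{\hat{f}}^\sigma)$, $\mu_w(\vec{\hat{f}}^\sigma)$, $(1+\kappa)\mu_w(\vec{f}^\kappa)$, or $\mu_w(\vec{f}^\kappa)$, using \emph{both} satisficing properties so that each substitution is valid for the sign of the factor it multiplies. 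The error incurred is dominated by the right-hand side of \eqref{strange_condition}, so the hypothesis leaves
\begin{equation*}
0\le\sum_{w\in\Wc}\sum_{p\in\Pc_w}\big[\mu_w(\vec{\hat{f}}^\sigma)-\mu_w(\vec{f}^\kappa)\big](\hat{f}^\sigma_p-f^\kappa_p)=\kappa\sum_{w\in\Wc}\big[\mu_w(\vec{\hat{f}}^\sigma)-\mu_w(\vec{f}^\kappa)\big]Q_w,
\end{equation*}
where the equality uses $\hat{Q}_w=(1+\kappa)Q_w$. Finally, $\mu_w(\vec{\hat{f}}^\sigma)\hat{Q}_w\le\sum_{p\in\Pc_w}c_p(\vec{\hat{f}}^\sigma)\hat{f}^\sigma_p$ and the $\kappa$-MSatUE property $\sum_{p\in\Pc_w}c_p(\vec{f}^\kappa)f^\kappa_p\le(1+\kappa)\mu_w(\vec{f}^\kappa)Q_w$ convert this into $0\le\frac{\kappa}{1+\kappa}\big(C(\vec{\hat{f}}^\sigma)-C(\vec{f}^\kappa)\big)$. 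The per-OD minimum costs and the $\sigma$-MSatUE property of $\vec{\hat{f}}^\sigma$ are thus the missing ideas; the global bilinear manipulation you propose cannot substitute for them.
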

\begin{proof}[Proof of Theorem \ref{thm:compare_asymmetric}]
We decompose $\Pc_w$ for each OD pair $w$ into the following four subsets:
\begin{align*}
\Pc^1_w &= \{ p\in\Pc_w : \hat{f}^\sigma_p > 0, \ f^\kappa_p > 0, \ \hat{f}^\sigma_p - f^\kappa_p \geq 0 \}, \\
\Pc^2_w &= \{ p\in\Pc_w : \hat{f}^\sigma_p > 0, \ f^\kappa_p > 0, \ \hat{f}^\sigma_p - f^\kappa_p < 0 \}, \\
\Pc^3_w &= \{ p\in\Pc_w : \hat{f}^\sigma_p > 0, \ f^\kappa_p = 0 \}, \\
\Pc^4_w &= \{ p\in\Pc_w : \hat{f}^\sigma_p = 0, \ f^\kappa_p > 0 \}.
\end{align*}
We ignore cases with $\hat{f}^\sigma_p = 0$ and $f^\kappa_p = 0$.
Note that $\hat{f}^\sigma_p - f^\kappa_p > 0$ for $p\in\Pc^3_w$ and $\hat{f}^\sigma_p - f^\kappa_p < 0$ for $p\in\Pc^4_w$.
From the definition of MSatUE flows, we have
\begin{align*}
\hat{f}^\sigma_p > 0 &\implies c_p(\vec{\hat{f}}^\sigma) \leq (1+\sigma) \mu_w(\vec{\hat{f}}^\sigma), \\
f^\kappa_p > 0       &\implies c_p(\vec{f}^\kappa) \leq (1+\kappa) \mu_w(\vec{f}^\kappa),
\end{align*}
for all $p\in\Pc_w, w\in\Wc$.
In addition, $\mu_w(\vec{\hat{f}}^\sigma) \leq c_p(\vec{\hat{f}}^\sigma)$ and $\mu_w(\vec{f}^\kappa) \leq c_p(\vec{f}^\kappa)$ for all $p\in\Pc$ by definition.
Therefore, we have
\begin{align*}
&\sum_{p\in\Pc} [c_p(\vec{\hat{f}}^\sigma) - c_p(\vec{f}^\kappa)] (\hat{f}^\sigma_p - f^\kappa_p) \\
& \leq \sum_{w\in\Wc}
	\bigg\{ \sum_{p\in\Pc^1_w} \Big[ (1+\sigma) \mu_w(\vec{\hat{f}}^\sigma) - \mu_w(\vec{f}^\kappa) \Big] (\hat{f}^\sigma_p - f^\kappa_p) 
+ \sum_{p\in\Pc^2_w} \Big[ \mu_w(\vec{\hat{f}}^\sigma) - (1+\kappa) \mu_w(\vec{f}^\kappa) \Big] (\hat{f}^\sigma_p - f^\kappa_p)	\\
&\qquad + \sum_{p\in\Pc^3_w} \Big[ (1+\sigma) \mu_w(\vec{\hat{f}}^\sigma) - \mu_w(\vec{f}^\kappa) \Big] (\hat{f}^\sigma_p - f^\kappa_p)	
 + \sum_{p\in\Pc^4_w} \Big[ \mu_w(\vec{\hat{f}}^\sigma) - (1+\kappa) \mu_w(\vec{f}^\kappa) \Big] (\hat{f}^\sigma_p - f^\kappa_p)	\bigg\} \\
&= \sum_{w\in\Wc}
	\bigg\{ \sum_{p\in\Pc_w} \Big[ \mu_w(\vec{\hat{f}}^\sigma) - \mu_w(\vec{f}^\kappa) \Big] (\hat{f}^\sigma_p - f^\kappa_p) 
+ \sigma \sum_{p\in\Pc^1_w \cup \Pc^3_w} \mu_w(\vec{\hat{f}}^\sigma) (\hat{f}^\sigma_p - f^\kappa_p)	\\
&\qquad\qquad - \kappa \sum_{p\in\Pc^2_w \cup \Pc^4_w} \mu_w(\vec{f}^\kappa) (\hat{f}^\sigma_p - f^\kappa_p)	\bigg\} \\
&\leq \sum_{w\in\Wc}
	\bigg\{ \sum_{p\in\Pc_w} \Big[ \mu_w(\vec{\hat{f}}^\sigma) - \mu_w(\vec{f}^\kappa) \Big] (\hat{f}^\sigma_p - f^\kappa_p) 
+ \sigma \sum_{p\in\Pc_w} \max\{\mu_w(\vec{\hat{f}}^\sigma),\mu_w(\vec{f}^\kappa)\} \Big| \hat{f}^\sigma_p - f^\kappa_p \Big|	\bigg\} \\
&\leq \sum_{w\in\Wc} \sum_{p\in\Pc_w} \Big[ \mu_w(\vec{\hat{f}}^\sigma) - \mu_w(\vec{f}^\kappa) \Big] (\hat{f}^\sigma_p - f^\kappa_p) 
+ \sigma \sum_{p\in\Pc} \max\{ c_p(\vec{\hat{f}}^\sigma), c_p(\vec{f}^\kappa) \} \Big| \hat{f}^\sigma_p - f^\kappa_p \Big| .
\end{align*} 
From \eqref{strange_condition}, we obtain
\begin{align*}
0 
&\leq \sum_{w\in\Wc} \sum_{p\in\Pc} \Big[ \mu_w(\vec{\hat{f}}^\sigma) - \mu_w(\vec{f}^\kappa) \Big] (\hat{f}^\sigma_p - f^\kappa_p) \\
&= \sum_{w\in\Wc} \Big[ \mu_w(\vec{\hat{f}}^\sigma) - \mu_w(\vec{f}^\kappa) \Big] \bigg(\sum_{p\in\Pc} \hat{f}^\sigma_p - \sum_{p\in\Pc} f^\kappa_p\bigg) \\
&= \sum_{w\in\Wc} \Big[ \mu_w(\vec{\hat{f}}^\sigma) - \mu_w(\vec{f}^\kappa) \Big] (\hat{Q}_w - Q_w) \\
&= \kappa \sum_{w\in\Wc} \mu_w(\vec{\hat{f}}^\sigma) Q_w - \kappa \sum_{w\in\Wc} \mu_w(\vec{f}^\kappa) Q_w \\
&= \frac{\kappa}{1+\kappa} \sum_{w\in\Wc} \mu_w(\vec{\hat{f}}^\sigma) \hat{Q}_w - \kappa \sum_{w\in\Wc} \mu_w(\vec{f}^\kappa) Q_w \\
&\leq \frac{\kappa}{1+\kappa} \sum_{w\in\Wc}\sum_{p\in\Pc_w}  c_p(\vec{\hat{f}}^\sigma) \hat{f}^\sigma_p
      - \frac{\kappa}{1+\kappa} \sum_{w\in\Wc}\sum_{p\in\Pc_w}  c_p(\vec{f}^\kappa) f^\kappa_p \\
&= \frac{\kappa}{1+\kappa} C(\vec{\hat{f}}^\sigma) - \frac{\kappa}{1+\kappa} C(\vec{f}^\kappa).
\end{align*}
Lemmas \ref{lem:simple_bound} and \ref{lem:UE_MSatUE} complete the proof.
\end{proof}

Note that condition \eqref{strange_condition} is stronger than condition \eqref{new_strange_condition} for separable travel time functions.
This is natural, since we consider more general classes of travel time functions.

\begin{figure}\centering
\begin{subfigure}[b]{0.4\textwidth}
	\begin{tikzpicture}[>=stealth',shorten >=1pt,auto,node distance=2.8cm]
		\node[shape=circle, draw] (O) at (0,0) {O};
		\node[shape=circle, draw] (D) at (5,0) {D};
		\path[->] (O) edge [bend left=20] node {$t_1(v_1)=1$} (D);
		\path[->] (O) edge [bend right=20, below] node {$t_2(v_2)=1+v_2$} (D);
	\end{tikzpicture}
	\caption{Example 1}
\end{subfigure}
\begin{subfigure}[b]{0.4\textwidth}
	\begin{tikzpicture}[>=stealth',shorten >=1pt,auto,node distance=2.8cm]
		\node[shape=circle, draw] (O) at (0,0) {O};
		\node[shape=circle, draw] (D) at (5,0) {D};
		\path[->] (O) edge [bend left=20] node {$t_1(v_1)=v_1$} (D);
		\path[->] (O) edge [bend right=20, below] node {$t_2(v_2)=v_2$} (D);
	\end{tikzpicture}
	\caption{Example 2}	
\end{subfigure}
\caption{Examples where the travel demand is $Q$ from node O to node D.}
\label{fig:simple2}
\end{figure}
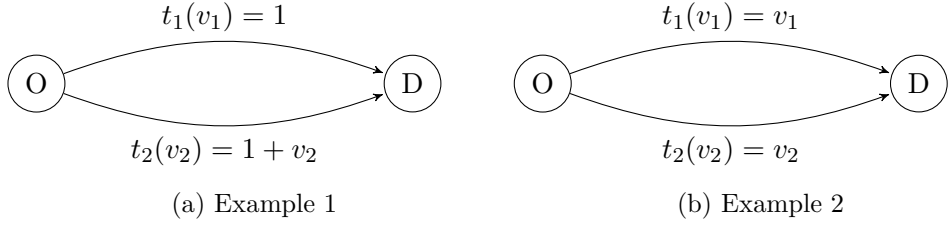

\subsection{Illustrative Examples} \label{sec:illustrative}

For the illustration purpose, we consider two examples in Figure \ref{fig:simple2} with linear travel time functions, where $n=1$.
In Example 1, the travel time function in the first arc is not increasing. 
We can verify that
\begin{align*}
\max C(\vec{f}^\kappa) = 
			\begin{cases} 
				Q + \kappa^2 & \text{ if } \kappa \leq Q, \qquad \text{ with } \vec{f}^\kappa = (Q-\kappa, \kappa) \\
				(1+Q)Q & \text{ if } \kappa \geq Q, \qquad \text{ with } \vec{f}^\kappa = (0, Q) 
			\end{cases}
\end{align*}
among all $\kappa$-MSatUE flows in $\set{F}$ and
\begin{align*}
C(\hat{\vec{f}}^0) = (1+\kappa) Q	\qquad\text{ with } \hat{\vec{f}}^0&= (1+\kappa)\vec{f}^0 = ((1+\kappa)Q, 0).
\end{align*}
among all $\kappa$-MSatUE flows in $\set{F}_{1+\kappa}$.
Comparing the two quantities, we observe $C(\vec{f}^\kappa) \leq C(\vec{\hat{f}}^0)$ in both cases.
To prove Theorem \ref{thm:compare_asymmetric}, condition \eqref{new_strange_condition} needs to hold only for these two flow vectors.
Regardless of the value of $\kappa$, however, it is impossible to satisfy condition \eqref{new_strange_condition}, although the worst-case PoSat bound $(1+\kappa)^{n+1}$ still holds for all $\kappa\geq 0$.
The price of satisficing is $1+\frac{\kappa^2}{Q}$ if $\kappa<Q$ and $1+Q$ if $\kappa\geq Q$ in this example, both of which are less than $(1+\kappa)^2$.

On the other hand, in Example 2, we have strictly monotone travel time functions in both arcs.
Similarly, we consider
\begin{align*}
	\max C(\vec{v}^\kappa) &= \frac{2+2\kappa+\kappa^2}{(2+\kappa)^2}Q &&\qquad\text{ with } \vec{f}^\kappa = \bigg( \frac{Q}{2+\kappa}, \frac{(1+\kappa)Q}{2+\kappa} \bigg ) \\
      C(\hat{\vec{v}}^0) &=\frac{(1+\kappa)^2}{2}Q &&\qquad\text{ with } \hat{\vec{f}}^0 = (1+\kappa)\vec{f}^0 = \bigg( \frac{(1+\kappa)Q}{2}, \frac{(1+\kappa)Q}{2} \bigg)
\end{align*}
and can verify that $C(\vec{f}^\kappa) \leq C(\vec{\hat{f}}^0)$ for all $\kappa \geq 0$.
In Example 2, we note that \eqref{strange_condition} holds for $\kappa \leq 0.206$.
In this example, we observe that the price of satisficing is $\frac{2(2+2\kappa+\kappa^2)}{(2+\kappa)^2}$, which is no greater than $(1+\kappa)^2$ for all $\kappa \geq 0$.

\subsection{Other Approaches}\label{otherapproach}

When there is a single origin and multiple destinations, i.e., a single common origin node, in the network, \citet{kleer2016impact} introduces the notion of the \emph{deviation ratio} that compares the system performances of the user equilibrium and the equilibrium with \emph{deviated} travel time functions $\tilde{t}_a(\cdot)$.
The notion of deviation may also be interpreted as perception in our definition.
In a special case, the deviation ratio is reduced to the price of risk aversion \citep{nikolova2015burden} that compares the performances of equilibria among risk-averse and risk-neutral network users.

\citet{kleer2016impact} define the (separable) deviated travel time functions with the following bounds:
\begin{equation} \label{dr_bound}
	t_a(v_a) + \alpha t_a(v_a) \leq \tilde{t}_a(v_a) \leq t_a(v_a) + \beta t_a(v_a)
\end{equation}
where $-1\leq \alpha \leq 0 \leq \beta$. 
The consideration of this deviated travel time function generalizes our UE-PE model where $\alpha = -\frac{\kappa}{1+\kappa}$ and $\beta = 0$.
\citet{kleer2016impact} show that the worst-case deviation ratio with \eqref{dr_bound} is bounded by
\[
	1 + \frac{\beta-\alpha}{1+\alpha} \bigg\lceil \frac{|\Nc| - 1}{2} \bigg\rceil Q.
\]
Therefore, we obtain the following theorem:

\begin{theorem}[\citealp{kleer2016impact}]\label{thm:deviation}
Consider a directed graph with a single common origin node with the total travel demand $Q$ and let $|\Nc|$ be the number of nodes. Then we have
\begin{equation} \label{deviation}
\frac{Z(\vec{v}^\kappa)}{Z(\vec{v}^0)} \leq 1 + \kappa \bigg\lceil \frac{|\Nc| - 1}{2} \bigg\rceil Q
\end{equation}
where $\vec{v}^\kappa$ is a solution to $\UEPEV$ in \eqref{ue-pe-v}.
\end{theorem}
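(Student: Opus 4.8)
The plan is to obtain the statement as a direct specialization of the worst-case deviation-ratio bound of \citet{kleer2016impact}, so the real work lies in the reduction rather than in any fresh equilibrium analysis. First I would identify the deviated travel-time functions induced by $\UEPEV$. By definition, a solution $\vec{v}^\kappa$ to \eqref{ue-pe-v} is a perfectly rational user equilibrium with respect to the \emph{perceived} arc costs $\tilde{t}_a(v_a) = \lambda_a t_a(v_a)$ for some $\lambda_a \in [\frac{1}{1+\kappa},1]$. In the language of \citet{kleer2016impact}, $\vec{v}^\kappa$ is then precisely an equilibrium for the deviated cost vector $\tilde{\vec{t}}$, while $\vec{v}^0$ is the equilibrium for the true cost vector $\vec{t}$. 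Because the hypotheses here --- a single common origin, total demand $Q$, separable travel-time functions, and finitely many nodes --- coincide with their setting, the ratio $Z(\vec{v}^\kappa)/Z(\vec{v}^0)$ is an instance of their deviation ratio.

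Second, I would verify that $\tilde{t}_a$ lies inside the admissible deviation class \eqref{dr_bound} for the specific pair $\alpha = -\frac{\kappa}{1+\kappa}$, $\beta = 0$. Writing $\tilde{t}_a = \lambda_a t_a$, the two-sided condition $(1+\alpha)t_a \le \tilde{t}_a \le (1+\beta)t_a$ reduces to $(1+\alpha) \le \lambda_a \le (1+\beta)$. With $\alpha = -\frac{\kappa}{1+\kappa}$ we have $1+\alpha = \frac{1}{1+\kappa}$, and with $\beta = 0$ we have $1+\beta = 1$; hence the requirement becomes exactly $\frac{1}{1+\kappa} \le \lambda_a \le 1$, which is the defining range of $\lambda_a$ in $\UEPEV$. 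Thus the perceived costs are genuinely admissible deviations of the type considered in \citet{kleer2016impact}.

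Third, since their deviation ratio is defined as the worst case over all admissible deviations, it dominates our particular ratio $Z(\vec{v}^\kappa)/Z(\vec{v}^0)$. Invoking the bound $1 + \frac{\beta-\alpha}{1+\alpha}\lceil\frac{|\Nc|-1}{2}\rceil Q$ and substituting $\alpha = -\frac{\kappa}{1+\kappa}$, $\beta = 0$, I would compute
\[
\frac{\beta - \alpha}{1+\alpha} = \frac{\tfrac{\kappa}{1+\kappa}}{\tfrac{1}{1+\kappa}} = \kappa,
\]
which collapses the bound to $1 + \kappa\lceil\frac{|\Nc|-1}{2}\rceil Q$, exactly as claimed in \eqref{deviation}.

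The main obstacle is conceptual rather than computational: the heavy lifting --- the derivation of the $\lceil(|\Nc|-1)/2\rceil Q$ factor through a worst-case single-origin network construction and a careful accounting of how the equilibrium responds to multiplicative cost perturbations --- resides entirely within \citet{kleer2016impact} and is not reproduced here. What I must take care to confirm is that $\UEPEV$ truly fits their framework: the separability of each $t_a$, the single-common-origin restriction, and the fact that their ratio is a supremum over admissible deviations (so that the one-sided perception error $\beta = 0$ is permitted). Once these correspondences are checked, the theorem follows by the substitution above.
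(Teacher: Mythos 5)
Your proposal is correct and takes essentially the same route as the paper: the paper also treats $\UEPEV$ as a special instance of the deviation framework of \citet{kleer2016impact} with $\alpha = -\frac{\kappa}{1+\kappa}$ and $\beta = 0$, noting that the perception range $\lambda_a \in [\frac{1}{1+\kappa},1]$ coincides with the admissible deviation class \eqref{dr_bound}, and then obtains \eqref{deviation} by substituting into their worst-case bound $1 + \frac{\beta-\alpha}{1+\alpha}\big\lceil \frac{|\Nc|-1}{2} \big\rceil Q$. Your computation $\frac{\beta-\alpha}{1+\alpha} = \kappa$ is exactly the reduction the paper relies on, with the heavy lifting deferred to the cited work in both cases.
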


Note that Theorem \ref{thm:deviation} only covers a subset of the entire MSatUE flows, as it is limited to the solutions $\UEPEV$ in \eqref{ue-pe-v} and is applicable to cases with a \emph{single} common origin.
When Theorem \ref{thm:deviation} is applied in the examples in Figure \ref{fig:simple2}, the bound \eqref{deviation} becomes $1+\kappa Q$.

\section{Numerical Bounds} \label{sec:numerical_PoSat}

To quantify PoSat in typical traffic networks and compare it with the analytical bound obtained in Theorem \ref{thm:compare_asymmetric}, we define the worst-case problem for the total system travel time under MSatUE as follows:
\begin{align} \label{mpcc0}
	\max_{\vec{v}^\kappa} &\quad  Z(\vec{v}^\kappa) = \sum_{a\in\Ac} z_a(v^\kappa_a) = \sum_{a\in\Ac} t_a(\vec{v}^\kappa) v_a^\kappa \\
	\text{subject to} & \quad\text{$\vec{v}^\kappa$ is an \underline{MSatUE} flow with $\kappa$} \nonumber
\end{align}
To quantify the benefit of satisficing, instead of maximizing, we can minimize the objective function \eqref{mpcc0}.
Since MSatUE involves path-based definition and formulation, \eqref{mpcc0} is numerically more challenging to solve. 
Instead, we replace MSatUE by $\UEPEX$. 
We know that the $\UEPEX$.  models provide a subset of MSatUE traffic flow patterns as seen in Lemmma \ref{lem:ue-pe-x}; hence by using $\UEPEX$ models, we will obtain suboptimal solutions to \eqref{mpcc0}. 

Using $\UEPEX$ in \eqref{ue-pe}, we formulate the worst-case problem as follows:
\begin{align} \label{mpec}
	\max_{\bar{\vec{v}}, \bar{\vec{x}}, \vec{\epsilon}} &\quad  Z(\bar{\vec{v}}) = \sum_{a\in\Ac} z_a(\bar{\vec{v}}) = \sum_{a\in\Ac} t_a(\bar{\vec{v}}) \bar{v}_a \\
	\text{subject to} 
	& \quad \sum_{a\in\Ac} \sum_{w\in\Wc} ( t_a(\bar{\vec{v}}) - \epsilon^w_a ) (x^w_a - \bar{x}^w_a) \geq 0 && \forall \vec{x}\in \set{X} \label{mpec1}  \\
	& \quad 	\bar{v}^\kappa_a = \sum_{w\in\Wc} \bar{x}^w_a  && \forall a\in\Ac \\
	& \quad \bar{x} \in \set{X} \\
	& \quad 0 \leq  \epsilon^w_a \leq \frac{\kappa}{1+\kappa} t_a(\bar{\vec{v}}) && \forall a\in\Ac  \label{mpec99}
\end{align}
Problem \eqref{mpec} is an instance of mathematical programs with equilibrium constraints (MPEC). 
We can replace the equilibrium condition \eqref{mpec1} by the following KKT conditions to create a single-level optimization problem:
\begin{align} 
	& \quad t_a(\bar{\vec{v}}) - \epsilon^w_a + \pi^w_i - \pi^w_j \geq 0  && \forall w\in\Wc, a\in\Ac \label{KKT00} \\
	& \quad \bar{x}^w_a ( t_a(\bar{\vec{v}}) - \epsilon^w_a + \pi^w_i - \pi^w_j ) = 0  && \forall w\in\Wc, a\in\Ac \label{KKT01}\\
	& \quad \sum_{a\in\Ac^+_i} \bar{x}^w_a - \sum_{a\in\Ac^-_i} \bar{x}^w_a = q^w_i && \forall w\in\Wc, i\in\Nc \label{KKT02} 
\end{align}
The resulting problem is a mathematical program with complementarity conditions (MPCC), which is nonlinear and nonconvex. 
Finding a global solution to MPCC problems is in general difficult, and \citet{kleer2016impact} has shown that solving the above MPCC optimally is NP-hard.
In order to solve this problem, we use an interior point method by utilizing the Ipopt nonlinear solver \citep{wachter2006implementation} with multiple starting solutions.

\subsection{Numerical Experiments} \label{sec:numerical_examples}
In this section we present some examples to compare the total travel times in $\MSatUE$ and PRUE numerically for both separable and asymmetric networks.
We approximate $\MSatUE$ by $\UEPEX$ and solve it by the Ipopt nonlinear solver, after reformulating \eqref{mpcc0} as a single-level optimization problem using KKT conditions.
We use the Julia Language and the JuMP package \citep{DunningHuchetteLubin2017} for modeling and interfacing with the Ipopt solver.

\begin{figure}
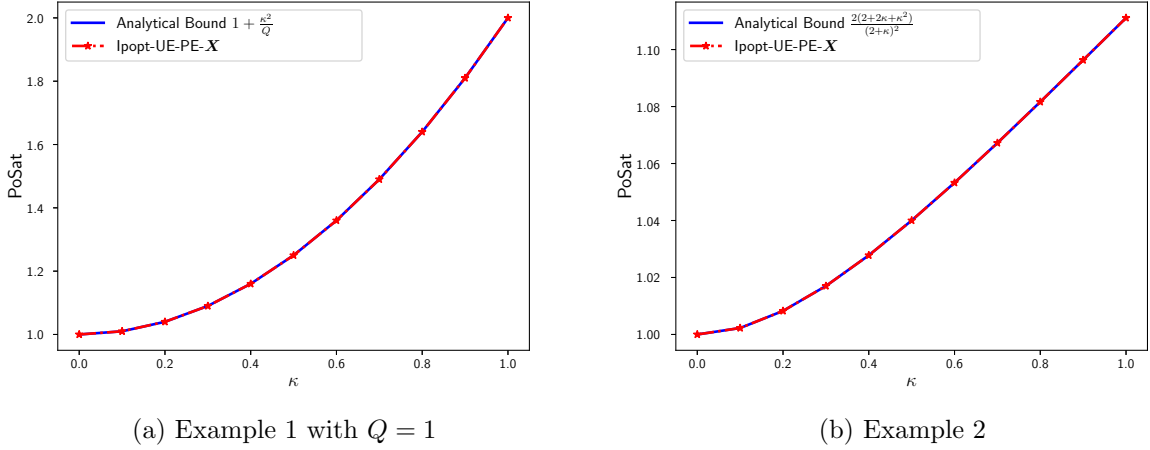
\centering
\begin{subfigure}[t]{0.49\textwidth}
	\resizebox{\textwidth}{!}{\input{graphics/example1.pgf}}
	\caption{Example 1 with $Q=1$}
	\label{fig:ex11}
\end{subfigure}
\begin{subfigure}[t]{0.49\textwidth}
	\resizebox{\textwidth}{!}{\input{graphics/example2.pgf}}
	\caption{Example 2}
	\label{fig:ex21}
	\end{subfigure}
\caption{$\PoSat$ for the simple networks in Figure \ref{fig:simple2}}
\label{fig:simple}
\end{figure}

\subsubsection{Simple Networks}\label{simple_example}
To test the validity and the strength of $\UEPEX$ model, we first consider Examples 1 and 2 in Figure \ref{fig:simple2}.
Figure \ref{fig:simple} compares the $\PoSat$ under $\UEPEX$ with the $\PoSat$ under $\MSatUE$, obtained in Section \ref{sec:illustrative}.
As Figure \ref{fig:simple} shows, the $\PoSat$ under $\UEPEX$ is equal to the the $\PoSat$ under $\MSatUE$ for both examples, which suggests that the $\UEPEX$ model is an effective model.

\subsubsection{Circular Network of \citet{christodoulou2011performance}}
We also compute the PoSat under $\UEPEX$ model for the circular network of \citet{christodoulou2011performance} presented in Figure \ref{fig:circular}.
For numerical experiments, we assign $m$ and $l$ to the smallest positive integers such that $\frac{m}{l}=(1+\kappa)^5$, and $\kappa \in \{0, 0.1, 0.2,...,1\}$. We also set $n=4$.
As it can be seen in Figure \ref{fig:circular_analytic}, we obtained identical results for circular network under $\UEPEX$, using the Ipopt solver, which shows that the $\UEPEX$ model can obtain the upper bound provided in Lemma \ref{lem:bound_from_poa}. 
\begin{figure}\centering 
	\resizebox{0.5\textwidth}{!}{\input{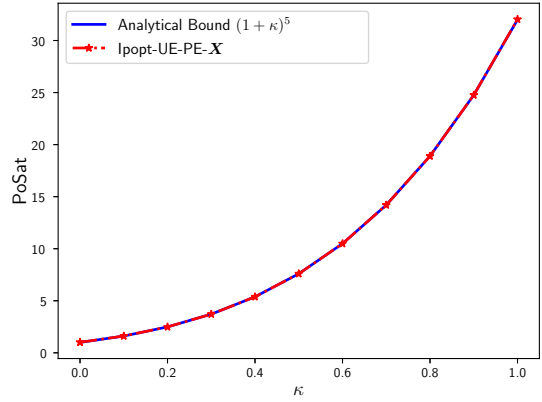}}
	\caption{$\PoSat$ for the circular network in Figure \ref{fig:circular}}
	\label{fig:circular_analytic}
\end{figure}

\subsubsection{Larger Networks}\label{sep_example}

We present some examples to compare the total travel times in $\MSatUE$ and $\PRUE$ numerically and compare the numerical worst-cases with the analytical bound given in Theorem \ref{thm:compare_separable} for larger networks with both separable and non-separable, asymmetric arc cost functions.
As \eqref{mpcc0} is a non-convex problem, the Ipopt solver can produce a local minimum at best.
To obtain a higher-quality local minimum, we solve the problem multiple times by using different initial solution.
For generating different initial solutions for the network with separable arc cost functions, we utilize $\UEPEV$ model. 
We generate initial $\vec{\lambda}$ randomly and use the Frank-Wolfe algorithm to obtain the corresponding $\vec{v}$ and $\vec{x}$.
For the network with non-separable cost function, we can use the fixed point method \citep{dafermos1980traffic} with a randomized $\vec{\lambda}$ to obtain an initial solution $\vec{x}$.
We randomly generate five initial starting points for each example and report the largest PoSat values.

We first consider the nine-node network presented in \citet{hearn1998solving}.
The nine-node network consists of 9 nodes and 18 arcs, and the travel time functions are polynomials of order $n=4$.
We also create an asymmetric variant of the nine-node network as shown in Figure \ref{fig6} in Appendix \ref{asynet}.
The asymmetric nine-node network has non-separable arc cost function in the form of \eqref{asyeq}. 
The comparison result is presented in Figure \ref{fig:nine-node}.
As Figure \ref{fig:nine-node} represents $\PoSat_\kappa$ increases with $\kappa$ for both symmetric and asymmetric nine-node network since $\PRUE$ total travel time is fixed with respect to $\kappa$, while the worst-case $\MSatUE$ total travel time increases as $\kappa$ increases.
Moreover, $\PoSat_\kappa$ is smaller for symmetric nine-node network compared to the asymmetric nine-node network for smaller $\kappa$ values (0.1 and 0.2), but it is greater for larger $\kappa$ values ($\kappa \ge 0.3$).
In general, the gap between $\PoSat_\kappa$ for symmetric nine-node network and asymmetric nine-node network is small.

Figure \ref{fig:nine-node-analytic} compares the numerical $\PoSat_\kappa$ with the analytical bound provided in Theorem \ref{thm:compare_separable} for $\MSatUE$ for the nine-node network. 
We observe that there is a large gap between the analytical and numerical bounds which increases with $\kappa$.
Although the analytical result certainly provides a valid bound, it is too large to be practically useful in realistic road networks. 
This indicates opportunities for empirical studies on the analytical bounds that depend on more network-specific information such as travel demands and travel time functions. 
The bound $(1+\kappa)^{n+1}$ in Theorem \ref{thm:compare_separable} is independent from such network-specific information.

\begin{figure}\centering
\begin{subfigure}[t]{0.49\textwidth}
	\resizebox{\textwidth}{!}{\input{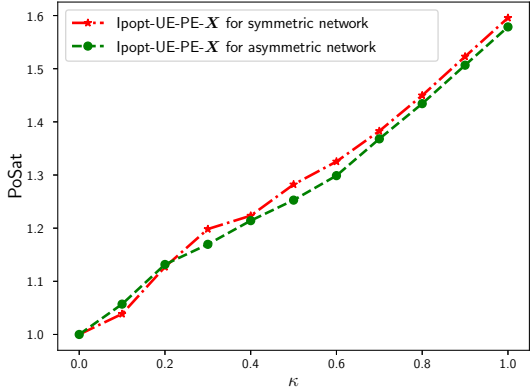}}
	\caption{Approximate $\PoSat$ (Ipopt-$\UEPEX$)}
	\label{fig:nine-node}
\end{subfigure}
\begin{subfigure}[t]{0.49\textwidth}
	\resizebox{\textwidth}{!}{\input{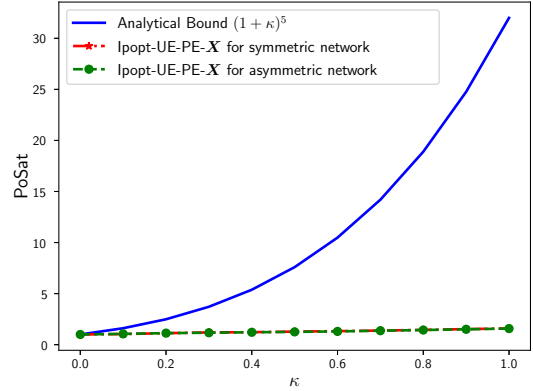}}
	\caption{Comparing analytical and numerical $\PoSat$}
	\label{fig:nine-node-analytic}
	\end{subfigure}
\caption{$\PoSat$ for nine-node network}
\label{fig:nine-node-figures}
\end{figure}

We also consider the Sioux Falls network presented in \citet{suwansirikul1987equilibrium}, which consists of 24 nodes, 76 arcs, and 576 OD pairs.
The arc travel cost function is the BPR function, which is a polynomial function with degree $n=4$.
We also consider an asymmetric variant of Sioux Falls network with arc cost function in the form of \eqref{asyeq}. 
As Figure \ref{fig:sioux-falls} represents, $\PoSat_\kappa$ increases with $\kappa$ for both symmetric and asymmetric Sioux Falls network, and it is greater compared to the nine-node network for both symmetric and asymmetric networks.
Furthermore,  $\PoSat_\kappa$ is greater for asymmetric Sioux Falls network compared with the symmetric Sioux Falls network for all positive $\kappa$ values,
and the gap between $\PoSat_\kappa$ for symmetric Sioux Falls network and $\PoSat_\kappa$ for asymmetric Sioux Falls network increases with $\kappa$.
Figure \ref{fig:sioux-falls-analytic} compares the numerical $\PoSat_\kappa$ with the analytical bound.
The gap between the analytical and the numerical bound is tighter compared to the nine-node network, but it is still considerable.

\begin{figure}
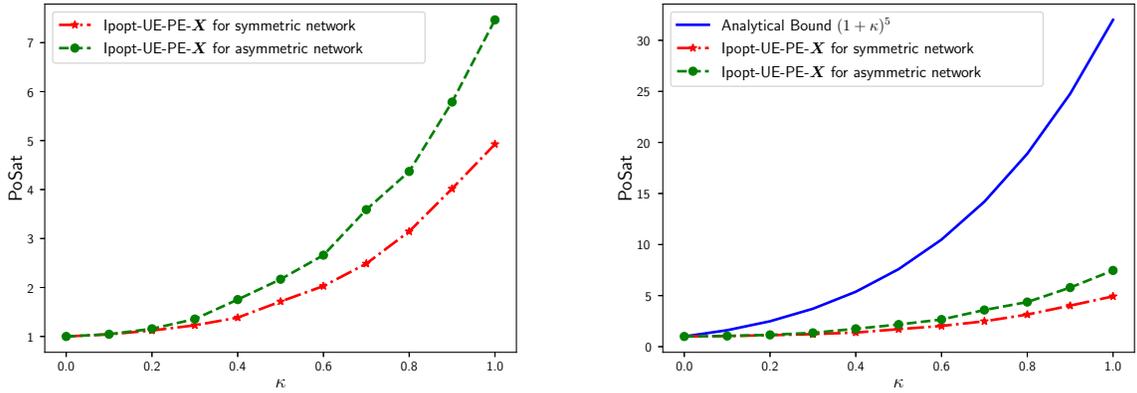
\centering
\begin{subfigure}[t]{0.49\textwidth}
	\resizebox{\textwidth}{!}{\input{graphics/asy_Posat1.pgf}}
	\caption{Approximate PoSat (Ipopt-$\UEPEX$)}
	\label{fig:sioux-falls}
\end{subfigure}
\begin{subfigure}[t]{0.49\textwidth}
	\resizebox{\textwidth}{!}{\input{graphics/asy_Posat2.pgf}}
	\caption{Comparing analytical and numerical PoSat}
	\label{fig:sioux-falls-analytic}
	\end{subfigure}
\caption{$\PoSat$ for Sioux Falls network}
\label{fig:sioux-falls-figures}
\end{figure}

\section{Concluding Remarks} \label{sec:conclusion}

When network users are satisficing decision makers, the resulting satisficing user equilibria may degrade the system performance, compared to the perfectly rational user equilibrium. 
To quantify how much the performance can deteriorate, this paper has quantified the worst-case analytical bound on the price of satisficing.
We also quantified the price of satisficing for several networks numerically and compare it to the analytical bound.

As we have seen in the numerical examples in this paper, there is a large gap between the worst-case analytical bound and the actual bound.
Clearly, this is a limitation of our approach.
In the literature of the price of anarchy have similar observations been reported \citep{o2016mechanisms, monnot2017bad, colini2017asymptotic}.
Likewise, the behavior of the price of satisficing in practice can be quite different from what we have observed in this paper.
Deriving empirical or network-specific bounds can be meaningful contributions as a future research direction.

We suggest additional potential future research directions. 
For the proposed analytical bound, our result is based on the condition \eqref{strange_condition}. 
By attempting to relax this condition, one may obtain a global bound for any value of $\kappa$.
In deriving the analytical bound, we utilized a novel technique comparing equilibrium patterns before and after the travel demand is increased; namely $\set{V}$ and $\set{V}_{1+\kappa}$.
Applying this technique in the context of the price of risk aversion and the deviation ratio would be an interesting research direction.

\section*{Acknowledgments}
This research was partially supported by the National Science Foundation under grant CMMI-1351357.
The authors are thankful for the anonymous reviewers whose constructive comments have greatly improved this manuscript.

\bibliographystyle{ormsv080-ck}
\bibliography{POS}

\appendix
\appendixpage

\section{Proof of Lemma \ref{lem:englert}}

\begin{proof}[Proof of Lemma \ref{lem:englert}]
This is a minor variant to the proof of \citet[Theorem 3]{englert2010sensitivity}.
Since $\vec{v}^0\in\set{F}$ and $\hat{\vec{v}}^0\in\set{F}_{1+\kappa}$ are PRUE flows that minimize $\Phi(\vec{\cdot})$ over their corresponding feasible sets, we have
\[
\Phi({\vec{v}}^0) \le \Phi \Big( \frac{\hat{\vec{v}}^0}{1+\kappa} \Big) \quad \text{and} \quad 
\Phi(\hat{\vec{v}}^0) \le \Phi \big( (1+\kappa)\vec{v}^0 \big),
\]
which imply
\begin{equation}\label{AA}
(1+\kappa)^{n+1} \sum_{a \in \Ac} \frac{b_a}{n+1} (v^0_a)^{n+1} \leq
\sum_{a \in \Ac} \frac{b_a}{n+1} (\hat{v}^0_a)^{n+1} 
\end{equation}
and
\begin{equation}\label{BB}
\sum_{a \in \Ac} \frac{b_a}{n+1} (\hat{v}^0_a)^{n+1} \leq
(1+\kappa)^{n+1} \sum_{a \in \Ac} \frac{b_a}{n+1} (v^0_a)^{n+1} 
\end{equation}
respectively.

Let us assume that $C(\vec{\hat{f}}^0) > (1+\kappa)^{n+1} C(\vec{f}^0)$,
which is equivalent to 
\begin{equation}\label{CC}
(1+\kappa)^{n+1} \sum_{a \in \Ac} b_{a}(v^0_a)^{n+1} <
\sum_{a \in \Ac} b_{a}(\hat{v}^0_a)^{n+1} .
\end{equation}
From $n\times \eqref{AA} + ((n+1)(1+\kappa)^n-1)\times\eqref{BB} + ((1+\kappa)^n-1)\times\eqref{CC}$, we obtain
\begin{equation} \label{DDD}
\theta_1 \sum_{a \in \Ac}\frac{b_{a}(v^0_a)^{n+1}}{n+1} 
< \theta_2 \sum_{a \in \Ac}\frac{b_{a}(\hat{v}^0_a)^{n+1}}{n+1}
\end{equation}
where
\begin{align*}
\theta_1 &=  n \cdot \frac{(1+\kappa)^{n+1} }{n+1} - ((n+1)(1+\kappa)^n-1) \cdot \frac{(1+\kappa)^{n+1}}{n+1} + ((1+\kappa)^n-1) \cdot (1+\kappa)^{n+1} = 0\\
\theta_2 &=  n \cdot \frac{1}{n+1} - ((n+1)(1+\kappa)^n-1) \cdot \frac{1}{n+1} + ((1+\kappa)^n-1) = 0
\end{align*}
for all $\kappa\geq0$.
Therefore, \eqref{DDD} leads to $0<0$, which is a contradiction.
We conclude that $C(\vec{\hat{f}}^0) \leq (1+\kappa)^{n+1} C(\vec{f}^0)$.
\end{proof}

\section{Nine-node Asymmetric Networks}\label{asynet}

In order to test the performance of $\UEPEX$ model in an asymmetric network, we create an asymmetric version of the nine-node network considered by \citet{hearn1998solving}.
In the asymmetric nine-node network, which has been shown in Figure \ref{fig6}, we add a few additional arcs and assume that the arc travel cost function is:
\begin{equation}\label{asyeq}
t_{a}(\vec{v})=A_{a}+B_{a}\bigg(\frac{0.5v_{\hat{a}} + v_{a}}{C_{a}}\bigg)^4
\end{equation}
where $\hat{a}$ is the flow in the opposite arc.
Thus, the arc travel function depends not only on the flow in that arc, but also on the flow in the arc in opposite direction.  
The values of parameters $A_{a}$, $B_{a}$ and $C_{a}$ are given in Table \ref{asynine} for each arc.
\begin{figure}\centering
	\begin{tikzpicture}[>=stealth',shorten >=1pt,auto,node distance=2.8cm]
		\node[shape=circle, draw] (1) at (0,0) {1};
		\node[shape=circle, draw] (5) at (4,0) {5};
		\node[shape=circle, draw] (7) at (8,0) {7};
		\node[shape=circle, draw] (3) at (12,0) {3};
		\path[->] (1) edge [bend left=10] node {} (5);
		\path[->] (5) edge [bend left=10] node {} (7);
		\path[->] (7) edge [bend left=10] node {} (3);
		\path[->] (5) edge [bend left=10] node {} (1);
		\path[->] (7) edge [bend left=10] node {} (5);
		\path[->] (3) edge [bend left=10] node {} (7);
		\node[shape=circle, draw] (2) at (0,-2) {2};
		\node[shape=circle, draw] (6) at (4,-2) {6};
		\node[shape=circle, draw] (8) at (8,-2) {8};
		\node[shape=circle, draw] (4) at (12,-2) {4};
		\node[shape=circle, draw] (9) at (6,-1) {9};
		\path[->] (2) edge [bend left=10] node {} (6);
		\path[->] (6) edge [bend left=10] node {} (8);
		\path[->] (6) edge [bend left=10] node {} (2);
		\path[->] (8) edge [bend left=10] node {} (6);
		\path[->] (4) edge [bend left=10] node {} (8);
		\path[->] (8) edge [bend left=10] node {} (4);
		\path[->] (1) edge [bend left=10] node {} (6);
		\path[->] (6) edge [bend left=10] node {} (1);
		\path[->] (2) edge [bend left=10] node {} (5);
		\path[->] (5) edge [bend left=10] node {} (2);
		\path[->] (5) edge [bend left=10] node {} (6);
		\path[->] (6) edge [bend left=10] node {} (5);
		\path[->] (7) edge [bend left=10] node {} (4);
		\path[->] (4) edge [bend left=10] node {} (7);
		\path[->] (7) edge [bend left=10] node {} (4);
		\path[->] (4) edge [bend left=10] node {} (7);
		\path[->] (7) edge [bend left=10] node {} (4);
		\path[->] (9) edge [bend left=10] node {} (7);
		\path[->] (6) edge [bend left=10] node {} (9);
		\path[->] (9) edge [bend left=10] node {} (6);
		\path[->] (5) edge [bend right=10] node {} (9);
		\path[->] (9) edge [bend right=10] node {} (5);
		\path[->] (9) edge [bend left=10] node {} (7);
		\path[->] (7) edge [bend left=10] node {} (9);
		\path[->] (8) edge [bend left=10] node {} (3);
		\path[->] (3) edge [bend left=10] node {} (8);
		\path[->] (6) edge [bend left=10] node {} (9);
		\path[->] (9) edge [bend left=10] node {} (6);
		\path[->] (8) edge [bend right=10] node {} (9);
		\path[->] (9) edge [bend right=10] node {} (8);
		\path[->] (7) edge [bend right=10] node {} (8);
		\path[->] (8) edge [bend right=10] node {} (7);
	\end{tikzpicture}
	\caption{Asymmetric nine-node network}
\label{fig6}
\end{figure}
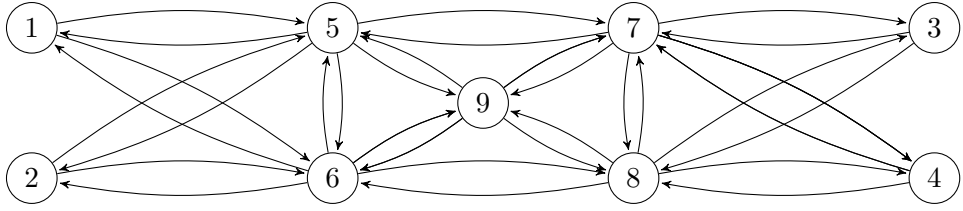

\begin{table}[]
\centering
\caption{asymmetric nine-node network arc cost function parameters}
\label{asynine}
\begin{tabular}{ c c c c }
\toprule
$a$  & $A_a$  & $B_a$    & $C_a$ \\
\midrule
(1,5) & 12 & 1.80 & 5 \\ 
(1,6) & 18 & 2.70 & 6 \\ 
(2,5) & 35 & 5.25 & 3 \\ 
(2,6) & 35 & 5.25 & 9 \\ 
(5,6) & 20 & 3.00 & 9 \\ 
(5,7) & 11 & 1.65 & 2 \\ 
(5,9) & 26 & 3.90 & 8 \\ 
(6,8) & 33 & 4.95 & 6 \\ 
(6,9) & 30 & 4.50 & 8 \\ 
(7,3) & 25 & 3.75 & 3 \\ 
(7,4) & 24 & 3.60 & 6 \\ 
(7,8) & 19 & 2.85 & 2 \\ 
(8,3) & 39 & 5.85 & 8 \\ 
(8,4) & 43 & 6.45 & 6 \\ 
(9,7) & 26 & 3.90 & 4 \\ 
(9,8) & 30 & 4.50 & 8 \\ 
(5,1) & 12 & 1.80 & 5 \\ 
(6,1) & 18 & 2.70 & 6 \\ 
(5,2) & 35 & 5.25 & 3 \\ 
(6,2) & 35 & 5.25 & 9 \\ 
(6,5) & 20 & 3.00 & 9 \\ 
(7,5) & 11 & 1.65 & 2 \\ 
(9,5) & 26 & 3.90 & 8 \\ 
(8,6) & 33 & 4.95 & 6 \\ 
(9,6) & 30 & 4.50 & 8 \\ 
(3,7) & 25 & 3.75 & 3 \\ 
(4,7) & 24 & 3.60 & 6 \\ 
(8,7) & 19 & 2.85 & 2 \\ 
(3,8) & 39 & 5.85 & 8 \\ 
(4,8) & 43 & 6.45 & 6 \\ 
(7,9) & 26 & 3.90 & 4 \\ 
(8,9) & 30 & 4.50 & 8 \\ 
\bottomrule
\end{tabular}
\end{table}

\end{document}